\documentclass[lettersize,journal]{IEEEtran}
\usepackage{amsmath,amsfonts}
\usepackage{algorithm}
\usepackage{array}

\usepackage{textcomp}
\usepackage{stfloats}
\usepackage{url}
\usepackage{verbatim}
\usepackage{graphicx}
\usepackage{cite}
\usepackage{xcolor}

\usepackage{subcaption}
\usepackage{float}
\usepackage{enumitem}
\usepackage{booktabs}
\usepackage{multirow}
\usepackage{comment}
\usepackage{framed}
\usepackage{amssymb}
\usepackage{amsthm}

\usepackage{amsthm}
\usepackage{algpseudocode}

\newcommand{\mc}{\mathcal}
\newcommand{\mb}{\mathbf}

\newcommand{\argmin}{\mathop{\rm argmin}}
\newcommand{\argmax}{\mathop{\rm argmax}}
\newcommand{\prox}{\operatorname{\rm prox}}

\newtheorem{lemma}{Lemma}

\newtheorem{proposition}{Proposition}

\newtheorem{claim}{Claim}

\newcommand{\wang}[1]{{\color{black} #1}}
\newcommand{\sndrev}[1]{{\color{black} #1}}

\newcommand{\LineComment}[1]{\Statex \(\triangleright\) #1}

\usepackage[hidelinks,bookmarks=false]{hyperref}

\begin{document}

\title{Clock-Free Stochastic Scheduling for Real-Time \\ Synchronized Content Delivery}

\author{Xuan~Wang, Xiaoyi~Wu, Ying~Sun, Atilla~Eryilmaz,~\IEEEmembership{Senior Member,~IEEE,}
and Bin~Li,~\IEEEmembership{Senior Member,~IEEE}%
\thanks{
This work was supported in part by the ARO Grant W911NF-24-1-0103, NSF AI Institute (AI-EDGE) 2112471, CNS-NeTS-2106679, ECCS-CSCS-2619078.

Xuan Wang, Xiaoyi Wu, Ying Sun, and Bin Li are with the Department of Electrical Engineering,
The Pennsylvania State University, University Park, PA, USA. Atilla Eryilmaz is with the Department of
Electrical and Computer Engineering, The Ohio State University, Columbus, OH, USA.}%
}

\maketitle

\begin{abstract}
Real-time synchronized content delivery is critical for applications such as cloud gaming, collaborative mixed reality, and live streaming, where a server must ensure that a group of users perceives the same content simultaneously for each delivery.  Due to heterogeneous and unpredictable network delays, ensuring synchronization is highly challenging. We propose a clock-free scheduling framework that introduces latency compensation to control the server's delivery time for each user, aiming to maximize synchronization performance (i.e., minimizing the gap between the maximum and minimum latencies) while maintaining a low overall latency cost. Unlike prior work requiring synchronized clocks or client-side timestamps, our approach uses only acknowledgment feedback, enabling flexible deployment across diverse environments. We formulate the problem as a convex stochastic optimization with a non-smooth objective. To solve the problem, we employ the stochastic proximal point (SPP) algorithm that updates the delay compensation, using only the previous latency samples, rather than the latency distribution. By exploiting the subproblem structure, we apply ADMM to enable efficient updates with per-iteration complexity of at most $\mathcal{O}(N\log N)$, where $N$ is the number of users. We establish a convergence rate of $\mathcal{O}(1/\sqrt{K})$, where $K$ is the number of content deliveries for transmission. The framework is further extended to handle transmission failures, where packet loss results in successful delivery to only a subset of users. Finally, simulations using both synthetic and real-world latency traces demonstrate the consistent superiority of our algorithm against baseline methods.
\end{abstract}

\begin{IEEEkeywords}
Collaborative Content Delivery, Synchronized Scheduling, Stochastic Optimization.
\end{IEEEkeywords}

\section{Introduction}
\IEEEPARstart{W}{e} consider the problem of real-time synchronized scheduling for collaborative content delivery, motivated by applications such as cloud gaming \cite{huang2013gaminganywhere}, collaborative mixed reality \cite{ens2019revisiting}, e-commerce live streaming \cite{li2022livenet}, and Industrial Internet of things (IIoT) system\cite{jia2021passive}. For example, in cloud gaming, a cloud server renders high-quality images and must deliver them simultaneously to a group of players to ensure a smooth gaming experience. In collaborative mixed reality (MR) based firefighter training, the server runs fire accident simulations and delivers high-quality virtual fire visuals to the MR headsets of a group of firefighters depending on their location and perspective. It is essential that all firefighters see the same status of virtual fire accidents simultaneously to facilitate effective collaboration. In e-commerce live streaming, the server must ensure that all users see the same live status to avoid chaos during limited-time promotional offers. In autonomous driving, a high-speed vehicle must simultaneously broadcast its real-time statuses, such as location and speed, to neighboring vehicles for collision prevention and overall safe navigation. For IIoT systems, concurrent commands from the control center to distributed devices are crucial for cohesive collaboration. In all these applications, synchronized content reception for a group of users is crucial. However, fluctuating network transmission rates naturally cause asynchronous content reception for each user, leading to non-smooth collaborative experiences.

One possible solution is for the server to deliver the content with precise display times for users through an efficient network time protocol (NTP) \cite{mills2002internet} or precision time protocol (PTP) \cite{garg2017wireless}, which provides synchronized clock information. However, this strategy faces significant practical limitations. First, NTP may experience calibration errors of up to $100$ ms in dynamic network environments, while the high precision of PTP requires specialized hardware typically unavailable on consumer devices~\cite{watt2015understanding}. Moreover, the content in the above collaborative applications is far more complex than simple commands and text, involving substantial amounts of data. For instance, a high-resolution rendered gaming image can exceed 5MB\cite{illahi2020cloud}. In such a case, accurately estimating content delivery time is challenging due to fluctuating network transmission rates. Consequently, NTP and PTP, which merely provide synchronized clock information, fail to ensure synchronization among a group of users for each content delivery. Therefore, the efficient design of real-time synchronized scheduling for collaborative content delivery is essential. 

The majority of existing content delivery methods aim to minimize transmission latency. Some researchers use multipath routing to prioritize scheduling data delivery designed for large-scale networks with multiple hops \cite{ali2021packet}. In multi-user video streaming applications, most algorithms adaptively adjust the bitrate of images or the buffer size according to the alteration of latency, which is at the cost of image quality and thus leads to a fluctuating user experience \cite{gao2018multi}\cite{chen2022enhancing}\cite{huang2023buffer}. However, these methods do not address the synchronization problem, which is particularly critical for cloud gaming, collaborative MR, and e-commerce live streaming.

Many works (e.g., \cite{mills2002internet, garg2017wireless}) studied different types of clock synchronization in the field of network, such as NTP and PTP, but typically these kinds of synchronization are achieved through frequent calibration of the target clock by minimizing the observed clock deviation from the time reference. Subsequently, some immersive applications (e.g., \cite{simiscuka2019real}) began addressing the synchronization problem, as network synchronization significantly impacts the consistency of views in a shared scene, thereby influencing the users' sense of presence. However, most of these applications still rely on the classical NTP approach. Although existing studies (e.g., \cite{hamza2004adaptive,schnierle2023latency}) employed network latency compensation to ensure user synchronization, they used heuristic methods to manipulate stochastic network latency to achieve synchronization, rather than providing a theoretical guarantee of their approach. 

In this paper, we propose a clock-free scheduling framework that introduces latency compensation to control the server's delivery time for each user, enhancing synchronization performance by minimizing the gap between maximum and minimum latencies. \wang{The framework uses delivery feedback from users to estimate user-specific latency, allowing synchronization to be handled at the server side with minimal client-side timing requirements.} However, this approach also increases the latency for each user to receive and view the content. Our goal is to determine the optimal latency compensation that maximizes synchronization performance while keeping the overall latency cost as low as possible. Due to the randomness of latency, our formulated problem is a stochastic optimization problem. Given that we can collect latency samples from each user based on previous content deliveries, our aim is to develop an efficient, low-complexity stochastic optimization method with fast convergence. However, our objective function includes a non-smooth component arising from the difference between the maximum and minimum latencies after latency compensation. To address this, we employ the stochastic proximal point (SPP) method for each content delivery using the previous latency samples collected from each user, leading to non-smooth deterministic subproblems. Then, instead of directly using a general-purposed solver such as SDPT3 and SeDuMi, we explore the inherent problem structure and utilize the alternating direction method of multipliers (ADMM) to decompose the computation, resulting in a computational complexity of at most $\mathcal{O}(N\log N)$ per iteration. 
We further demonstrate that our proposed algorithm achieves a convergence rate of $\mathcal{O}(1/\sqrt{K})$ and its adaptability in handling packet drops in realistic network environments. Our main contributions are summarized as follows:

$\bullet$ We formulate the real-time synchronized scheduling problem as a stochastic convex optimization task, aiming at minimizing both the maximum-minimum latency gap and the overall latency cost by controlling latency compensation. Unlike prior work relying on synchronized clocks, our approach is clock-free and uses only acknowledgment feedback \wang{naturally available in systems such as cloud gaming and MR, requiring no specialized hardware or client-side timestamps}, enabling practical deployment in diverse network environments.

$\bullet$ We propose the Synchronized Content Scheduling (SCS) algorithm based on the SPP method, which efficiently uses only previous latency samples. By exploiting the subproblem structure and applying ADMM to decompose computations, the algorithm achieves a lightweight $\mathcal{O}(N\log N)$ scalar operations per iteration. We further extend the framework to handle packet-dropping scenarios by updating only the subset of acknowledged users, thereby ensuring robust synchronization under unreliable network conditions.

$\bullet$ We establish a convergence rate of $\mathcal{O}(1/\sqrt{K})$ for the proposed algorithm, and validate its performance through extensive simulations using both synthetic and real-world latency traces. Extensive experimental results demonstrate the superiority of the proposed algorithm over baseline methods in synchronization performance and computational efficiency.

\emph{Note on Notation:} We use boldface to denote vectors and script font to denote sets. The symbol $\succeq$ represents the component-wise ``greater than or equal to" operation. The $\ell_2$ norm of a vector $\mb{x}$ is denoted by $\|\mb{x}\|$. 
Let $\mc{H}$ be an index set, then we denote $\|\mb{x}\|_{\mc{H}}$ as the $\ell_2$ norm restricted to the set $\mc{H}$, i.e. $\|\mb{x}\|_{\mc{H}}^2 \triangleq \sum_{n\in \mc{H}}(x(n))^2$.
Let $\overline{\mb{x}}$ denote the vector of the mean of the elements in the vector $\mb{x}$. We use $[N]$ to denote the set of $\{1,2,\ldots,N\}$. Let $\mathbb{R}^N$ and $\mathbb{Z}_{+}$ denote the $N-$dimensional Euclidean space and the set of positive integers, respectively. 
We use $\mathcal{O}(\cdot)$ and $\Omega(\cdot)$ to denote asymptotic upper and lower bounds, respectively. Specifically, for functions $f(x)$ and $g(x) > 0$, we denote $f(x) = \mathcal{O}(g(x))$ if there exist constants $C > 0$ and $x_0$ such that $|f(x)| \leq C g(x)$ for all $x \geq x_0$, and $f(x) = \Omega(g(x))$ if there exist constants $C > 0$ and $x_0$ such that $|f(x)| \geq C g(x)$ for all $x \geq x_0$.

\section{Related Work}
\label{sec:related_work}

In this section, we provide an overview of key topics
closely related to our work: collaborative content delivery, synchronization in networking, and stochastic network scheduling.

\textbf{Collaborative content delivery:} Collaborative content delivery underpins a wide range of applications, such as cloud gaming, video streaming, and VR, all of which require tight inter-user synchronization to maintain quality of experience (QoE). In online games, for instance, latencies exceeding 20 to 30ms can significantly degrade the user experience~\cite{stauffert2020latency}. \textit{Li et al.} \cite{li2018lag} propose a real-time lag compensation mechanism for cloud gaming that equalizes uplink and downlink delays, improving fairness and QoE in multiplayer FPS games. \textit{Kassir et al.}\cite{kassir2021joint} propose a joint update rate adaptation algorithm for multiplayer cloud gaming that dynamically adjusts player update rates to mitigate the impact of network delays. In addition, real-time content delivery also plays a critical role in Video streaming and VR applications.
\textit{Gao et al.}\cite{gao2018multi} propose a cooperative mobile video streaming framework that optimizes bitrate adaptation and downloading using a Lyapunov-based algorithm to enhance QoE in multi-user environments. \textit{Park et al.} \cite{park2018minimizing} optimize end-to-end latency in wireless VR social interactions by allocating bandwidth, thereby addressing synchronization challenges arising from disparities between virtual and physical users. 

While these studies offer valuable strategies for improving QoE, like reducing latency and enhancing video quality, they mainly target individual user performance and often overlook the stringent synchronization demands for simultaneous content reception across users. Even when synchronization is considered, most approaches rely on heuristic clock alignment without theoretical guarantees on synchronization performance. This critical gap highlights the need for provably efficient and theoretically grounded algorithms for real-time synchronization in collaborative content delivery applications.

\textbf{Synchronization in networking:} Clock synchronization is crucial for accurate timekeeping in distributed systems. Numerous studies (e.g., \cite{mills2002internet, garg2017wireless}) have explored various synchronization techniques such as NTP and PTP, which operate by periodically calibrating local clocks against a reference source. Software-based NTP is vulnerable to path asymmetry and jitter, which can result in calibration errors of up to 100 ms in wireless or mobile environments. PTP improves accuracy through hardware time-stamping at network interface controllers, but it requires specialized hardware that is typically absent in commodity systems. Other techniques aim to enhance accuracy and robustness. Synchronous Ethernet (SyncE) \cite{ferrant2008synchronous}, for example, embeds timing information at the physical layer to reduce packet delay variation, but demands SyncE-compatible hardware with built-in clock recovery, making it incompatible with standard Ethernet infrastructure. Overall, existing methods either struggle under dynamic network conditions or rely on dedicated hardware for precision. To address these challenges, we propose a latency compensation technique that is efficient, easy to deploy, and robust against varying network conditions, offering a practical solution to high-precision synchronization.

\textbf{Stochastic network scheduling:}  Stochastic network scheduling has been extensively studied. \textit{Jiang and Walrand}\cite{jiang2009stable} propose the Deficit Maximum Weight algorithm for utility-maximizing and stable scheduling in stochastic processing networks. More recent research tailors scheduling to emerging applications and metrics. \textit{Li et al.}\cite{li2018efficient} develop an algorithm that equalizes packet completion times across multiple nodes. In distributed learning, \textit{Yang et al.}\cite{yang2019scheduling} analyze how wireless scheduling impacts the convergence rate of federated learning. While the focus of stochastic network scheduling is shifting from traditional metrics, such as throughput and delay, towards data freshness with the Age of Information \cite{kadota2018scheduling, joo2018wireless, zhou2023exploring, zhou2025novel}, the ``synchrony" in these pioneering works is limited to aligning data freshness. These approaches fail to address the stricter temporal alignment of content delivery essential for real-time applications, leaving application-aware synchronization largely under-explored.

\section{System Model}
\label{sec:model}
We consider the synchronized content transmission from a single server to $N$ users, whereby the server aims to deliver consecutive content to all users simultaneously so that they receive and view the content at the same time. Let $d_k(n)$ denote the latency experienced by user $n$ to receive and view the $k^{th}$ content. 
In practice, $d_k(n)$ is measured as the elapsed time between sending the content packet and receiving the corresponding acknowledgment (ACK). 
\sndrev{Since the ACK (a few bytes) is much smaller than the content payload, this RTT-like measurement serves as a practical proxy for the downlink delivery latency, although its accuracy may be affected by uplink/downlink asymmetry and ACK jitter.
We model $d_k(n)$ as a random variable to account for the fluctuating transmission rates, varying content information, and random ACK-path variations.} 
For example, in video streaming, the frame size changes over time, as shown in Fig. 2 in  \cite{xiong2022index}. We assume $d_k(n)$ is within a finite duration $\tau_{\max}$. For example, the finite duration (or maximum tolerable latency) in VR streaming is typically within $20$ms \cite{stauffert2020latency}. We assume that $\{d_k(n), \forall n=1,2,\ldots,N,\forall k=1,2,\ldots\}$ are identically and independently distributed (i.i.d.) across delivered content for each user and independently distributed across users\footnote{We note that this assumption aids in the development and analysis of our algorithm. However, as demonstrated through simulations with real-world data in Section \ref{sec:sim}, our proposed algorithm does not depend on this assumption.}. 
For each content $k$, the latency experienced by each user varies. Thus, we employ \emph{latency compensation} for each user to postpone their transmission and ensure synchronization across all users for each content delivery. Let $x_k(n)$ denote the latency compensation of user $n$ for delivering the $k^{th}$ content, i.e., the server delivers the $k^{th}$ content to user $n$ after $x_k(n)$ time units. Here, we note that $x_k(n)\geq0,\forall n,\forall k$, since we can only delay the transmission, not send it in advance. Let $\mb{d}_k\triangleq(d_k(n))_{n=1}^{N}$ and $\mb{x}_k\triangleq(x_k(n))_{n=1}^{N}$ represent the vectors for each user's latency and latency compensation 
when delivering the $k^{th}$ content, respectively.  

\begin{figure}[!t] 
    \centering
    \begin{subfigure}[t]{0.24\textwidth}  %
        \centering
        \includegraphics[width=\textwidth]{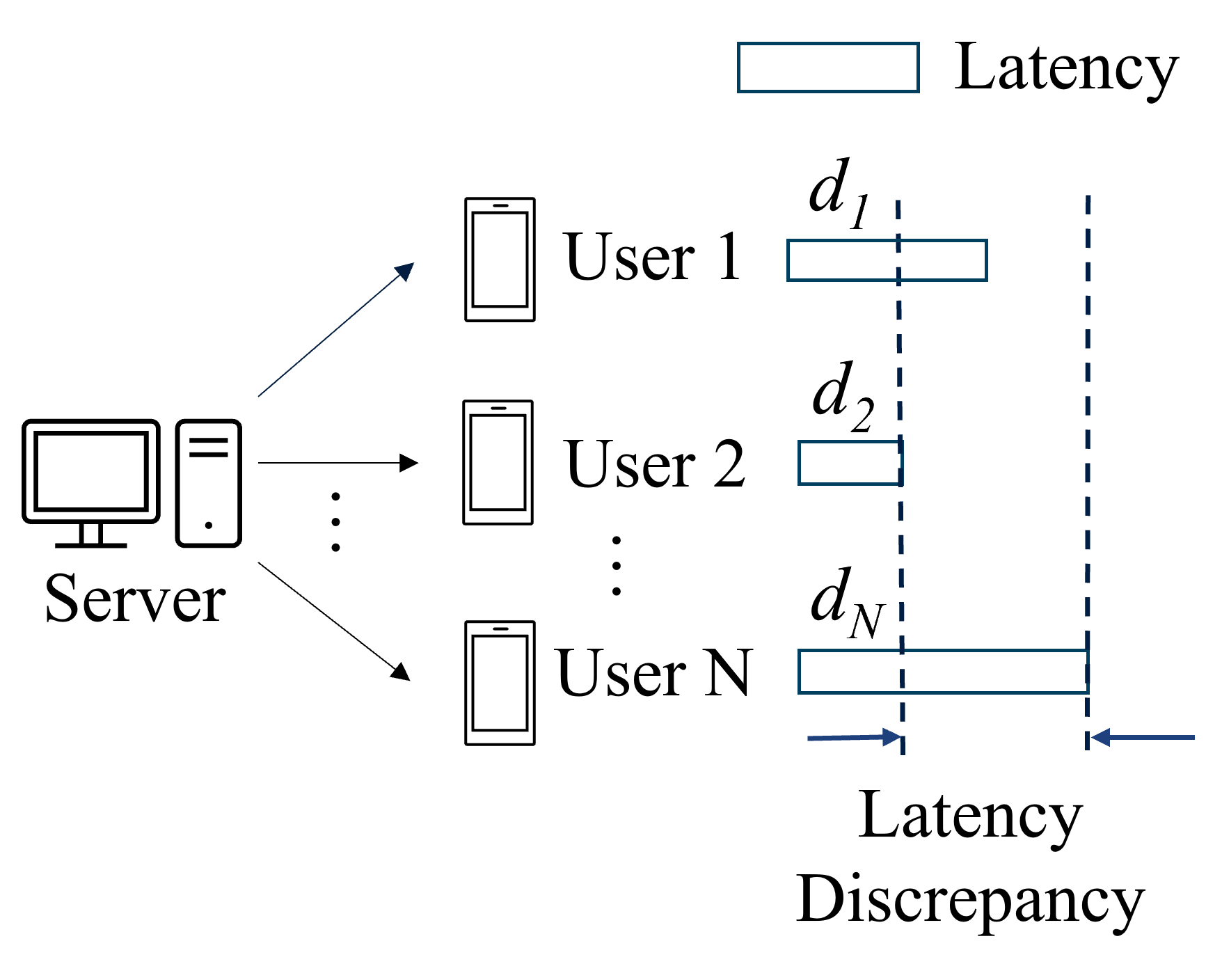}
        \caption{Before latency compensation}
        \label{fig:model:before}
    \end{subfigure}
    \begin{subfigure}[t]{0.24\textwidth}
        \centering
        \includegraphics[width=\textwidth]{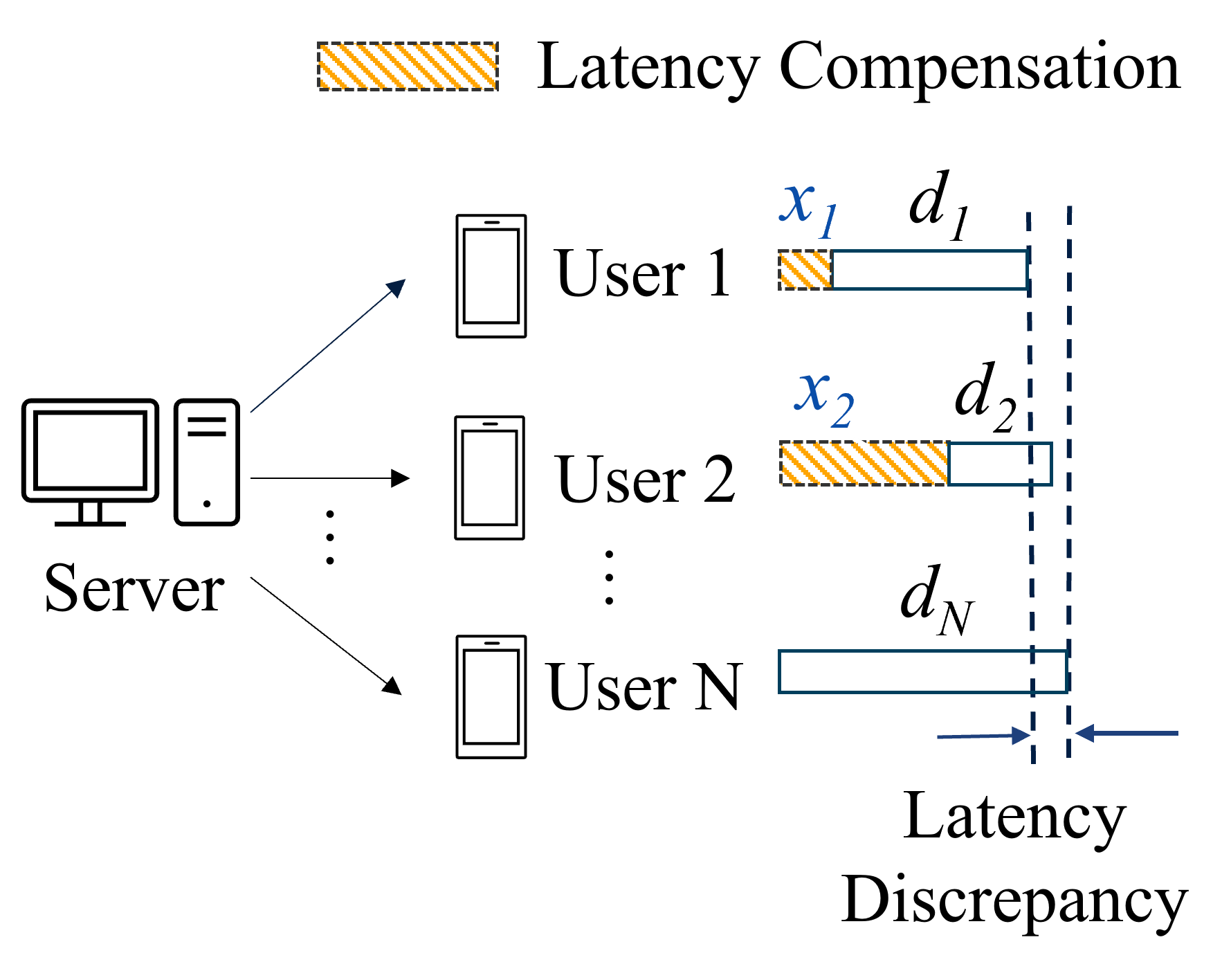}
        \caption{After latency compensation}
        \label{fig:model:after}
    \end{subfigure}
    \caption{Content delivery: incorporating latency compensation enhances synchronization for all users, though it increases latency for individual users.}
    \label{fig:model}
\end{figure}

While latency compensation can potentially improve synchronization among all users, it increases the latency for each user to receive and view the content. As such, we need to carefully manage the latency compensation to maximize synchronization among all users for each content delivery, i.e., minimizing the gap between maximum and minimum latencies after latency compensation while keeping the \wang{overall user-perceived latency} as small as possible. Fig. \ref{fig:model} illustrates the scenarios before and after applying latency compensation. It is evident from Fig. \ref{fig:model} that adding latency compensation enhances synchronization among all users, albeit at the cost of increased latency for each user. Since $\mb{d}_k$ is i.i.d. across delivered content, we omit the index $k$ for simplicity. As such, we aim to solve the following optimization problem: \begin{align} 
\label{eqn:obj}
\min_{\mathbf{x} \succeq \mb{0}} F(\mb{x})\triangleq\mathbb{E}_{\mathbf{d}} \left[ f(\mb{x};\mb{d})\right],
\end{align} 
where the instantaneous synchronization performance 
\begin{multline*}
    f(\mb{x};\mb{d})\triangleq \max_{n} [x(n) + d(n)] - \min_{n } [x(n) + d(n)] \\ + w \sum_{n=1}^{N}(c_n(x(n)+d(n))),
\end{multline*}
and $w\geq0$ is some non-negative weighting parameter that balances the synchronization and delay performance. The cost function $c_n(\cdot): \mathbb{R}_+ \rightarrow \mathbb{R}_+$  represents the latency-induced cost perceived by user $n$. It is assumed to be smooth, non-negative, and convex\footnote{A function $f$ is called \emph{convex} if $f(\lambda\mb{x} + (1-\lambda)\mb{y}) \leq \lambda f(\mb{x}) + (1-\lambda) f(\mb{y})$ holds for any vectors $\mb{x}$ and $\mb{y}$ within its domain and any parameter $\lambda \in [0, 1]$.}, reflecting increasing user dissatisfaction with higher latency. Typical examples of such cost functions include linear and quadratic forms such as $ c_n(x(n) + d(n)) = x(n) + d(n) $ or $ c_n(x(n) + d(n)) = (x(n) + d(n))^2 $, as discussed in~\cite{tsai2023distribution}. We can show that both $f( \, \cdot \,;\mb{d})$ and $F$ are proper\footnote{A function $f$ is called \emph{proper} if it never takes the value $-\infty$ and has at least one finite value.} and convex in $\mb{x}$, as demonstrated in Appendix \ref{appendix:convex}. This ensures the existence of the optimal solution to the optimization problem \eqref{eqn:obj}. Let $\mb{x}^\star \triangleq (x^\star(n))_{n=1}^N$ be the optimal solution to the above convex optimization problem, i.e., $\mb{x}^\star\in\argmin_{\mb{x}\succeq \mb{0}}F(\mb{x})$.   

Directly solving the optimization problem \eqref{eqn:obj} requires knowledge of the latency distributions for all users, which is typically unavailable in practice. Instead, prior to each content delivery, we can collect latency samples from each user based on the previous content delivery. This motivates us to develop an efficient stochastic optimization method with fast convergence. 
Here, we emphasize that synchronization performance is measured by the discrepancy between the maximum and minimum latencies after latency compensation.
\wang{This max-min criterion directly controls the worst inter-user latency gap, unlike variance-type metrics which measure average dispersion around the mean. For example, if one user has compensated latency $x(1) + d(1)= M$ and all others have $x(n) + d(n) =0$, then the max-min discrepancy is $M$, while the variance $M^2(N-1)/N^2$ decreases as the number of users increases. }
However, our stricter synchronization requirement renders the objective function non-smooth\footnote{A function is \emph{non-smooth} if it includes non-differentiable components, such as the maximum and minimum operations in our objective function.}, presenting considerable challenges in designing fast algorithms. Despite this, we explore the inherent problem structure to accelerate the computation, enabling real-time scheduling in practice. Next, we provide a numerical example demonstrating the benefits of incorporating latency compensation. 

\section{A Motivating Example}
\label{sec:example}

\begin{figure}[!t]  %
    \centering
    \begin{subfigure}[b]{0.24\textwidth}
        \centering
        \includegraphics[width=\textwidth]{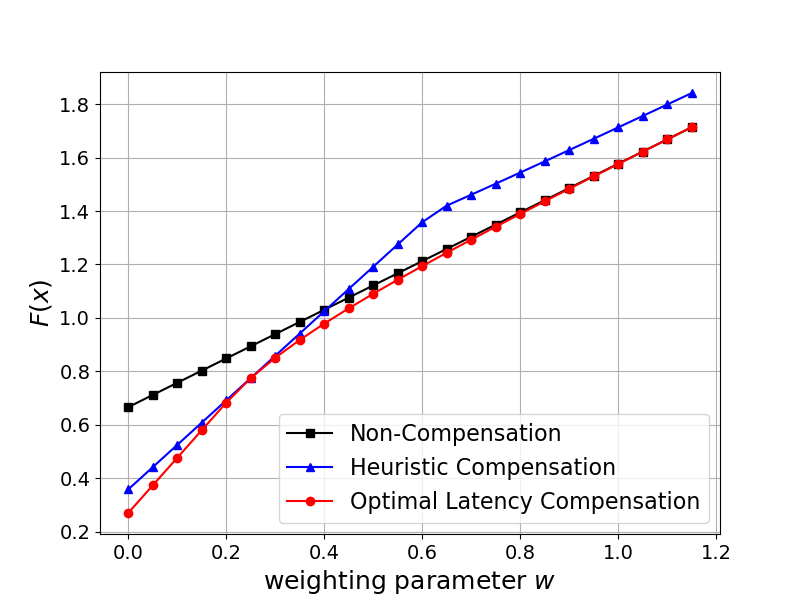}
        \caption{Objective value} %
        \label{fig:example:a}
    \end{subfigure}
    \hfill
    \begin{subfigure}[b]{0.24\textwidth}
        \centering
        \includegraphics[width=\textwidth]{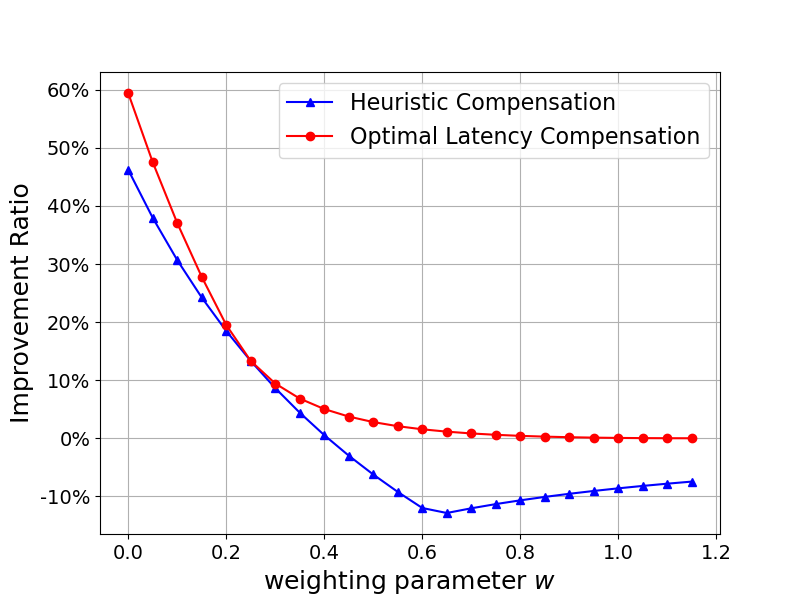}
        \caption{Improvement ratio}
        \label{fig:example:b}
    \end{subfigure}
    \caption{Impact of weighting parameter $w$.}
    \label{fig:example:both}
\end{figure}

In this section, we consider a simple (albeit hypothetical) numerical example to illustrate the advantages of incorporating latency compensation, given the latency distributions of all users.  Specifically, we consider a scenario with two users, where the latency of delivering content to each user follows the distribution described below. 
\begin{align*}
\mathbb{P}[d(n) = 1] = p_n \text{ and } \mathbb{P}[d(n) = 0.1] =  q_n, \forall n =1, 2,   
\end{align*}
where $0\leq p_n\leq 1, \forall n=1,2$, and $p_n + q_n = 1$. The cost function is defined as $c_n(x(n)+d(n)) \triangleq (x(n)+d(n))^2$. 
Hence, the objective function  in \eqref{eqn:obj} can be expressed as follows:
\begin{equation}
    \label{eqn:example}
    \begin{aligned}
    F(\mathbf{x}) &= p_1 p_2 \left( |x_1 - x_2| + w \left( (x_1 + 1)^2 + (x_2 + 1)^2 \right) \right) \\
    +& p_1 q_2 \left( |x_1 - x_2 + 0.9| + w \left( (x_1 + 1)^2 + (x_2 + 0.1)^2 \right) \right) \\
    +& q_1 p_2 \left( |x_1 - x_2 - 0.9| + w \left( (x_1 + 0.1)^2 + (x_2 + 1)^2 \right) \right) \\
    +& q_1q_2 \left( |x_1 - x_2| + w \left( (x_1 + 0.1)^2 + (x_2 + 0.1)^2 \right) \right).
\end{aligned}
\end{equation}

We can obtain the optimal latency compensation based on parameters $p_1, p_2$, and $w$ using an optimization solver\wang{, which serves as the benchmark.}
\wang{We compare it with a heuristic that aligns users according to their average compensated delivery times while retaining the latency-cost term. Specifically, it solves the deterministic problem:
\begin{align}
    \label{eqn:heuristic}
    \min_{\mb{x}\succeq \mb{0}} f(\mb{x};\mathbb{E}[\mb{d}]).
\end{align}
Here, $\mathbb{E}[\mb{d}]$ can be estimated from observed latency samples.}

To evaluate the synchronization performance, we define the \textit{improvement ratio} as:
\begin{equation}
    \text{Improvement Ratio}(\mb{x}) = \frac{F(\mb{0}) - F(\mb{x}) }{F(\mb{0})},
    \label{eqn:improveratio}
\end{equation}
where $F(\mb{x})$ is the objective function value under a given latency compensation vector $\mb{x}$, and $F(\mb{0})$ denotes the baseline value with no latency compensation applied. This metric measures the relative reduction in our synchronization objective function, $F(\cdot)$, compared to a baseline with no compensation. Therefore, a higher improvement ratio indicates a more significant enhancement in synchronization performance. Fig. 2 illustrates the performance of different latency compensation methods (i.e., optimal latency compensation, heuristic latency compensation, and no latency compensation) for $p_1=0.1 \;(q_1=0.9)$ and $p_2 = 0.8 \;(q_2=0.2)$.
We can see from Fig. 2a that the optimal latency compensation method results in the lowest objective function value. In contrast, the heuristic latency compensation performs even worse than no latency compensation when the weighting parameter $w$ is relatively large (e.g., $w=0.4$). Fig. \ref{fig:example:b} further demonstrates the impact of the weighting parameter $w$ on the improvement ratio compared to no latency compensation. We can observe from Fig. \ref{fig:example:b} that the improvement ratio of optimal latency compensation can reach up to around 60\%, outperforming the heuristic approach, particularly when the weighting parameter is small (e.g., less than 0.2). Moreover, the improvement ratio decreases as the weighting parameter $w$ increases. This is because a large $w$ places more emphasis on latency, and adding the latency compensation increases the latency for each user. Fig. \ref{fig:example:disparity} illustrates the impact of latency heterogeneity when the weighting parameter $w=0.2$. Fig. \ref{fig:example:disparity} shows that the improvement ratio is substantial, reaching up to 63.7\%, particularly when there is a significant disparity in user latency (e.g., a large difference between $p_1$ and $p_2$).

\begin{figure}[!t]
    \centering
    \includegraphics[width=0.7\linewidth]{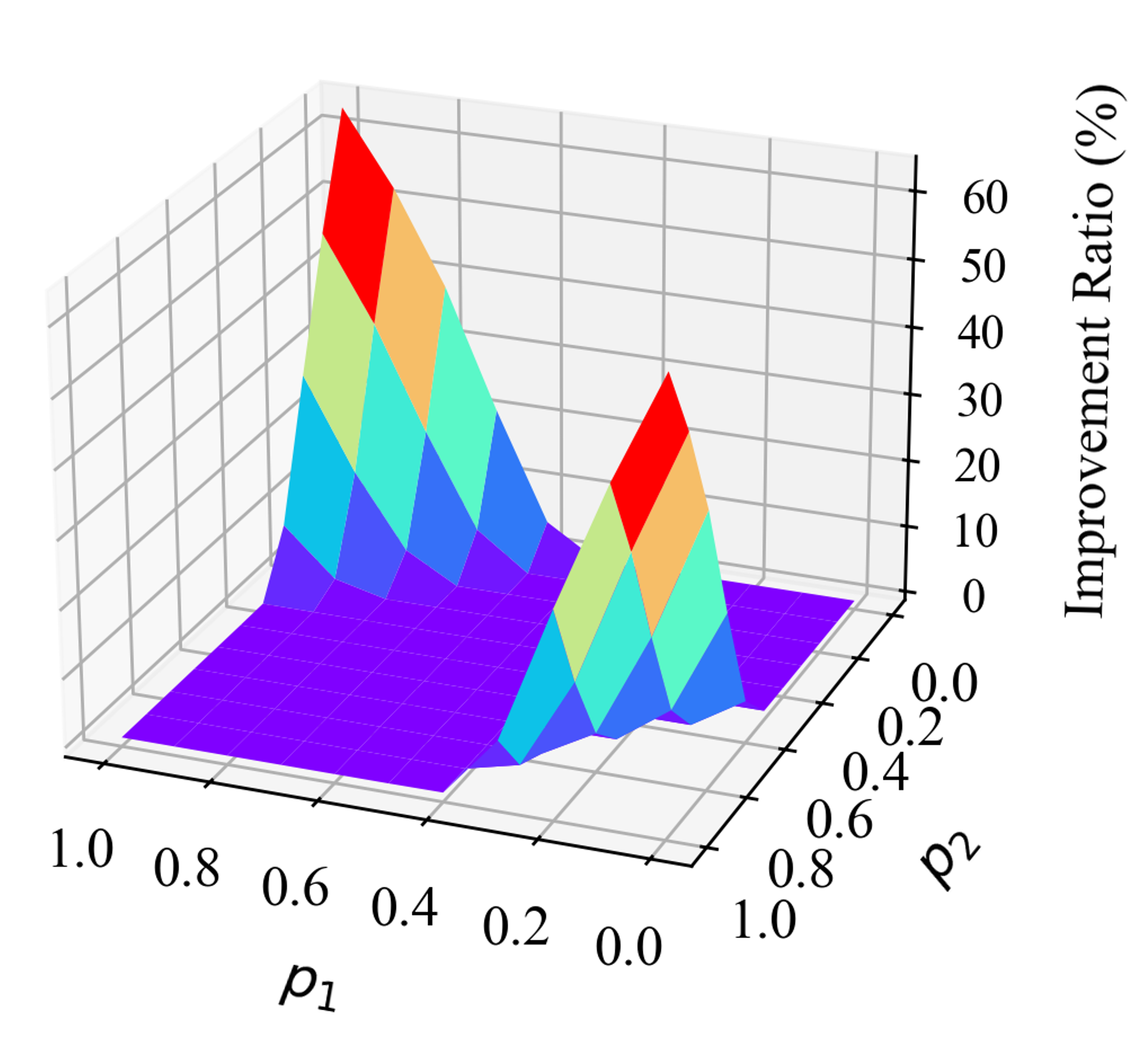}
    \caption{Impact of latency disparity when $w=0.2$.}
    \label{fig:example:disparity}
\end{figure}

This simple example demonstrates the substantial performance improvement achieved by optimally selecting latency compensation, particularly when synchronization performance is critical (e.g., with a small weighting parameter) or significant latency disparities across users. However, this requires knowledge of the latency distribution for all users, which is typically unavailable in practice. In the next section, we will develop a stochastic optimization algorithm that only requires latency samples from each user for each content delivery and exhibits fast convergence.

\section{Synchronized Scheduling Algorithm Design}
\label{sec:algorithm}

In this section, we develop a real-time synchronized scheduling algorithm for collaborative content delivery, which efficiently solves the stochastic optimization problem \eqref{eqn:obj} using latency measurement samples from each user for each content delivery. The proposed algorithm integrates the classic stochastic proximal point (SPP) method (see~\cite{rockafellar1976monotone,asi2019stochastic})  and the alternating direction method of multipliers (ADMM) (see \cite[Ch. 15]{beck2017first}), and the inherent problem structure to accelerate computation. Next, we discuss the motivation for employing SPP and ADMM methods to solve problem \eqref{eqn:obj}. 

Note that our objective function includes a non-smooth component resulting from the difference between the maximum and minimum latencies after latency compensation (see \eqref{eqn:example} for an example in the case with two users). As such, it is natural to employ the stochastic subgradient method. 
However, it is known that even in the deterministic setting, subgradient methods can suffer from numerical instabilities such as being sensitive to the step size choice and not being a descent method (i.e., the objective can increase between iterations). On the contrary, proximal point methods are generally more stable than gradient iterations and, in the deterministic setting, converge for a wide range of step size choices~\cite{bertsekas2011incremental}. Motivated by these numerical advantages, we apply its stochastic version as the framework for our algorithm design.

Recall that $\mb{d}_k$  is the random latency experienced by the users when the server delivers the $k^{th}$ content. Starting with some initial value $\mb{x}_0 \succeq \mb{0}$ for the latency compensation at $k =0$,  
we construct the SPP update rule:
\begin{align}\label{p:spp-subprob}
    \mb{x}_{k+1} = \argmin_{\mb{x} \succeq \mb{0}}\bigg\{f \big(\mb{x}; \mb{d}_k \big) + \frac{1}{2 \gamma_k} \| \mb{x} - \mb{x}_k\|^2 \bigg\},
\end{align}
where the step size $\gamma_k > 0$ will be specified in Proposition \ref{Proposition convergence}. 

The subproblem~\eqref{p:spp-subprob} is deterministic, strongly convex, and has a unique solution. A straightforward solution is to directly solve it using a general-purposed solver such as SDPT3~\cite{toh1999sdpt3} and SeDuMi. However, as the complexity of these solvers generally scales polynomially with the problem dimension, the latency generated by solving the problem can easily exceed $6$ ms for a moderate user number $N$, rendering the compensation control meaningless. As such, there is a pressing need for a low computational complexity algorithm for~\eqref{p:spp-subprob}, which will be developed next.

As the first step, we apply a change of variable and define $\mb{y} \triangleq \mb{x} + \mb{d}_k$ and $\mb{y}_k \triangleq \mb{x}_k + \mb{d}_k$. Substituting the expression of $f$ into~\eqref{p:spp-subprob}, we can equivalently rewrite it as
\begin{multline}\label{p:spp-subprob-equiv}
    \mb{y}_{k+1} =
      \argmin_{\mb{y} \succeq \mb{d}_k} \bigg\{ \max_{n \in [N]} y(n) - \min_{n \in [N]} y(n) \\
      + w \sum_{n=1}^{N } c_n(y(n))  + \frac{1}{2 \gamma_k} \| \mb{y} - \mb{y}_k\|^2\bigg\}.
\end{multline}

Denote $\widetilde{f}(\mb{y}) \triangleq \max_{n \in [N]} y(n) - \min_{n \in [N]} y(n)$.
The difficulty in solving~\eqref{p:spp-subprob-equiv} is the co-existence of the nonsmooth term $\widetilde{f}$ and the element-wise inequality constraint $\mb{y} \succeq \mb{d}_k$. This structure motivates applying the ADMM algorithm to simplify the computation by decoupling the two terms and just optimizing with respect to (w.r.t.) one of them per update.

\subsection{ADMM for Solving~\eqref{p:spp-subprob-equiv}}

Introducing auxiliary vector $\mb{z}$, we rewrite~\eqref{p:spp-subprob-equiv} as 
\begin{align} 
\begin{split}
     \min_{\mb{y}, \mathbf{z}} &  \left\{ \widetilde{f}(\mb{z}) + w \sum_{n=1}^{N } c_n(y(n)) + \frac{1}{2 \gamma_k} \| \mb{y} - \mb{y}_k\|^2\right\} \label{eqn:SPP:separate:1}\\
    \text{s.t.} & \quad \mathbf{z} = \mb{y} ,  \quad \mb{y} \succeq \mathbf{d}_k.
\end{split}
\end{align}
Let $\boldsymbol{\lambda} \in \mathbb{R}^N$ be the Lagrange multiplier vector corresponding to the equality constraint. After relaxing the equality, the augmented Lagrangian is given by:
\begin{align}
\begin{split}
          L_{\rho} (\mb{y}, \mb{z}, \boldsymbol{\lambda}) \triangleq \, &  \widetilde{f}(\mb{z}) +w \sum_{n=1}^{N } c_n(y(n)) + \frac{1}{2 \gamma_k} \| \mb{y} - \mb{y}_k\|^2  \\ 
    & + \langle \boldsymbol{\lambda}, \mb{z} - \mb{y}\rangle  + \frac{\rho}{2} \|\mb{z}- \mb{y} \|^2, 
\end{split}
\end{align}
where $\rho > 0 $ is a suitably chosen step size.

Let superscript $t \in \mathbb{Z}_+$ denote the  ADMM iteration index. Given  initialization $\boldsymbol{\lambda}_k^0 \in \mathbb{R}^N$, $\mb{z}_k^0 = \mb{y}_k^0 \succeq \mb{d}_k$, the variables are updated as:
\begin{align} 
 \mb{y}_k^{t+1}  = &\argmin_{\mb{y}\succeq \mathbf{d}_k}  \Big\{w \sum_{n=1}^{N } c_n'(y_k^t(n))y(n) + \frac{\eta}{2}\|\mb{y} - \mb{y}_k^{t} \|^2 \nonumber \\ 
 &\;+ \frac{1}{2\gamma_k} \|\mb{y} - \mb{y}_k\|^2  +\frac{\rho}{2} \Big\|\mathbf{z}_k^{t}- \mb{y} + \frac{1}{\rho} \boldsymbol{\lambda}_k^t \Big\|^2 
        \Big\}, \label{eqn:ADMM:x} \\
\mb{z}_k^{t+1}  = &\argmin_{\mb{z}} \widetilde{f}(\mb{z}) + \frac{\rho}{2} \Big\| \mb{z} - \mb{y}_k^{t+1} + \frac{1}{\rho} \boldsymbol{\lambda}_k^{t}\Big\|^2, \label{eqn:ADMM:z}  \\
\boldsymbol{\lambda}_k^{t+1}  =  &\boldsymbol{\lambda}_k^{t} + \rho \,(\mathbf{z}_k^{t+1} - \mb{y}_k^{t+1}), \label{eqn:ADMM:lambda}
\end{align}
where $c_n'(\cdot)$ representing the derivative of the cost function and $\eta$ is a tunable parameter. The $\mb{y}$-update \eqref{eqn:ADMM:x} follows a linearized ADMM scheme \cite{ouyang2015accelerated}, where the cost term is approximated at the current iterate using the first-order gradient information. Since $\mb{y}$-update \eqref{eqn:ADMM:x} is a separable convex optimization problem, each element admits a closed-form update:
\begin{align}
\hat{y}_k^{t+1}(n) &= \frac{ \frac{ y_k(n)}{\gamma_k} + \rho z_k^t(n) + \lambda_k^t(n) - w c_n'(y_k^t(n)) + \eta y_k^t(n)}{ \frac{1}{\gamma_k} + \rho  + \eta}, \nonumber \\
y_k^{t+1}(n) &= \max \{ d_k(n), \hat{y}_k^{t+1}(n) \}, \quad \forall n \in [N]. \label{eqn:ADMM:y_solution}
\end{align}

In particular, when $c_n(\cdot)$ is linear, setting $\eta = 0$ recovers the standard ADMM update, as shown in our prior work \cite{wang2025optimal}.  

It remains to specify how to compute the $\mb{z}$-update~\eqref{eqn:ADMM:z}. As such, we introduce the \emph{proximal mapping} 
\begin{align}\label{eq:prox}
\prox_{\widetilde{f}/\rho}(\mb{x}) \triangleq \argmin_{\mb{y}}  \widetilde{f}(\mb{y}) + \frac{\rho}{2} \| \mb{y} - \mb{x}\|^2.    
\end{align}
As such, the $\mb{z}$-update~\eqref{eqn:ADMM:z} corresponds to the calculation of $\prox_{\widetilde{f}/\rho}(\mb{y}^{t+1}-\frac{1}{\rho}\boldsymbol{\lambda}^{t})$. Next, we introduce a fast algorithm to compute the proximal mapping.

\subsection{Fast Computation of the Proximal Mapping $\prox_{\widetilde{f}/\rho}(\mb{x})$}

\wang{We note that $\widetilde{f}(\mb{y})$ is convex and $\frac{\rho}{2} \| \mb{y} - \mb{x}\|^2$ is $\rho$-strongly convex for $\rho >0$. Hence, the proximal subproblem is strongly convex, and $\prox_{\widetilde{f}/\rho}(\mb{x})$ is the unique optimal solution.}
Before introducing the algorithm for~\eqref{eq:prox}, we need the following lemma regarding the property of $\prox_{\widetilde{f}/\rho}$. For brevity, we denote $\prox_{\widetilde{f}/\rho}(\mb{x})$ as $\mb{y}^\star$.

\begin{lemma}\label{lem:order-preserve}
    Let $\mb{x}\triangleq(x(n))_{n=1}^{N}$ be sorted in non-decreasing order: $x(1) \leq x(2) \leq \cdots \leq x(N)$. Then the elements of $\prox_{\widetilde{f}/\rho}(\mb{x})$ must also be non-decreasing. 
\end{lemma}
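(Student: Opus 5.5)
The plan is an exchange (swapping) argument that uses uniqueness of the proximal point. The key observation is that $\widetilde{f}(\mb{y}) = \max_n y(n) - \min_n y(n)$ is symmetric: it is invariant under any permutation of the coordinates of $\mb{y}$. Write $G(\mb{y}) \triangleq \widetilde{f}(\mb{y}) + \frac{\rho}{2}\|\mb{y}-\mb{x}\|^2$. As noted just before the lemma, $G$ is $\rho$-strongly convex, so $\mb{y}^\star = \prox_{\widetilde{f}/\rho}(\mb{x})$ is its unique minimizer.

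\textbf{Setting up the contradiction.} Suppose the conclusion fails. Then there exist indices $i<j$ with $y^\star(i) > y^\star(j)$, while $x(i) \le x(j)$ by the sorting assumption. Define $\mb{y}'$ from $\mb{y}^\star$ by swapping coordinates $i$ and $j$. By permutation invariance, $\widetilde{f}(\mb{y}') = \widetilde{f}(\mb{y}^\star)$. The quadratic terms differ only in coordinates $i$ and $j$, and expanding the squares gives
\begin{align*}
\|\mb{y}^\star-\mb{x}\|^2 - \|\mb{y}'-\mb{x}\|^2 = 2\big(y^\star(i)-y^\star(j)\big)\big(x(j)-x(i)\big) \ge 0 .
\end{align*}
Hence $G(\mb{y}') \le G(\mb{y}^\star)$.

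\textbf{Closing the argument.} There are two cases, both of which use uniqueness.
\begin{itemize}
\item If $x(i) < x(j)$, the inequality above is strict. Then $G(\mb{y}') < G(\mb{y}^\star)$, which contradicts optimality of $\mb{y}^\star$.
\item If $x(i) = x(j)$, then $G(\mb{y}') = G(\mb{y}^\star)$, so $\mb{y}'$ is also a minimizer. Since $\mb{y}' \neq \mb{y}^\star$ (because $y^\star(i) \ne y^\star(j)$), this contradicts uniqueness.
\end{itemize}
Either way no inverted pair exists, so $y^\star(1)\le \cdots \le y^\star(N)$.

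The main subtlety is the tie case $x(i)=x(j)$. There the swap gives no strict improvement, so the argument needs strong convexity and the resulting uniqueness rather than just the computed inequality. An alternative for that case is averaging: by convexity, $\frac{1}{2}(\mb{y}^\star+\mb{y}')$ has strictly smaller objective because $\frac{\rho}{2}\|\cdot-\mb{x}\|^2$ is strictly convex, which again gives a contradiction. Beyond this, the only step to verify is the routine algebra of the displayed identity.
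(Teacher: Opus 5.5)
Your proposal is correct and is essentially the paper's own proof. Both use the same coordinate-swap argument based on the permutation invariance of $\widetilde{f}$: the quadratic term strictly decreases when $x(i)<x(j)$, and when $x(i)=x(j)$ the swap gives a second minimizer, contradicting uniqueness of the proximal point.
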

\begin{proof}
    See Appendix~\ref{pf:lem-1}.
\end{proof}

Assisted by Lemma~\ref{lem:order-preserve}, we are now ready to introduce the algorithm for efficiently computing~\eqref{eq:prox}. First, note that problem~\eqref{eq:prox} is permutation invariant. Therefore, without loss of generality, we assume the elements of $\mb{x}$ are in non-decreasing order. According to Lemma~\ref{lem:order-preserve}, it must hold that $\argmax_{n \in [N]} y^\star(n) = \{N, N-1, \ldots, n_{\max}\}$ for some positive integer $n_{\max}\leq N$. That is, the indices for which $y^\star(n)$ attain its maximum must include $N$ and decrease consecutively until $n_{\max}$. Similarly, for the indices where $y^\star(n)$ attain its minimum, we have $\argmin_{n \in [N]} y^\star(n) = \{1, 2, \ldots, n_{\min}\}$ for some positive constant $n_{\min}$, where $n_{\min}\leq n_{\max}$. When $n_{\min}=n_{\max}$, then all elements of $\mb{y}^\star$ are equal. When $n_{\min}<n_{\max}$, then the elements of $\mb{y}^\star$ satisfies
\begin{multline}\label{eq:index-constr}
        y^\star(N)  = \cdots = y^\star(n_{\max}) > y^\star(n_{\max}-1) \\
        \geq \cdots  \geq y^\star(n_{\min} + 1) > y^\star(n_{\min}) = \cdots = y^\star(1).
\end{multline}

Given these nice structure properties, we can effectively calculate $\mb{y}^\star$ as follows.
\begin{proposition}\label{prop:prox-sol}
    Assume $\mb{x}$ has non-decreasing elements and let
    $n_{\min}$ and $n_{\max}$ be the indices satisfying~\eqref{eq:index-constr}. Then
    \begin{itemize}
        \item[(1)] When $n_{\min}=n_{\max}$, then 
        \begin{align}
            \label{prop:solution:case1}
            y^\star(1)=y^\star(2)=\cdots=y^\star(N)=\frac{1}{N}\sum_{n=1}^{N}x(n).
        \end{align}
        \item[(2)] When $n_{\min}< n_{\max}$, then 
    \end{itemize}
    \begin{align}\label{eq:prox-sol}
        y^\star(n) = 
        \begin{cases}
            x(n), &  n_{\min} < n < n_{\max}\\[1ex]
            \frac{\sum_{n = n_{\max}}^N x(n) -1/\rho}{N - n_{\max} + 1} , & n \geq n_{\max}\\[1ex]
             \frac{\sum_{n = 1}^{n_{\min}} x(n) + 1/\rho}{n_{\min}}, & n \leq n_{\min}.
        \end{cases}
    \end{align}
\end{proposition}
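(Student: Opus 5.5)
We will use the first-order optimality condition for the strongly convex problem~\eqref{eq:prox}. Since $\mb{y}^\star$ is its unique minimizer, we have $\rho(\mb{x}-\mb{y}^\star) \in \partial \widetilde{f}(\mb{y}^\star)$. The first step is therefore to describe $\partial \widetilde{f}(\mb{y})$ explicitly. Write $\widetilde{f} = g_1 + g_2$ with $g_1(\mb{y}) = \max_n y(n)$ and $g_2(\mb{y}) = -\min_n y(n) = \max_n (-y(n))$. Both are finite convex functions on $\mathbb{R}^N$, so the subdifferential sum rule applies and gives $\partial \widetilde{f} = \partial g_1 + \partial g_2$.

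By Danskin's formula, $\partial g_1(\mb{y})$ is the set of probability vectors $\mb{a}$ supported on $\argmax_n y(n)$. Likewise, $\partial g_2(\mb{y}) = \{-\mb{b}\}$, where $\mb{b}$ ranges over the probability vectors supported on $\argmin_n y(n)$. Hence the optimality condition reads
\[
\rho\bigl(x(n) - y^\star(n)\bigr) = a(n) - b(n), \qquad n \in [N],
\]
for some such $\mb{a}$ and $\mb{b}$.

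\emph{Case (2), $n_{\min} < n_{\max}$.} By~\eqref{eq:index-constr} and Lemma~\ref{lem:order-preserve}, the argmax set is $\{n_{\max},\ldots,N\}$ and the argmin set is $\{1,\ldots,n_{\min}\}$, and these are disjoint. We read off the three index ranges in turn.
\begin{itemize}
\item For $n_{\min} < n < n_{\max}$, the index $n$ lies in neither set, so $a(n) = b(n) = 0$ and $y^\star(n) = x(n)$.
\item For $n \ge n_{\max}$, we have $b(n) = 0$ and $y^\star(n)$ equals a common value $c_{\max}$. Summing the condition over this block and using $\sum_{n \ge n_{\max}} a(n) = 1$ gives
\[
\rho\Bigl(\sum_{n \ge n_{\max}} x(n) - (N - n_{\max} + 1)\, c_{\max}\Bigr) = 1,
\]
which solves to the stated formula.
\item For $n \le n_{\min}$, we have $a(n) = 0$. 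Summing over this block with $\sum_{n \le n_{\min}} b(n) = 1$ gives $\rho\bigl(\sum_{n \le n_{\min}} x(n) - n_{\min} c_{\min}\bigr) = -1$, which is the third line of~\eqref{eq:prox-sol}.
\end{itemize}

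\emph{Case (1), all $y^\star(n)$ equal to a common value $c$.} Here both the argmax and argmin sets equal $[N]$, and $\mb{a}$, $\mb{b}$ are arbitrary probability vectors. Summing the optimality condition over all $n$ gives $\rho\bigl(\sum_n x(n) - N c\bigr) = 1 - 1 = 0$, so $c$ is the mean of $\mb{x}$.

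An alternative route for case (1) avoids subgradients. When $\mb{y}$ is restricted to constant vectors, $\widetilde{f}(\mb{y}) = 0$, so the objective reduces to $\frac{\rho}{2}\|\mb{y} - \mb{x}\|^2$. This is minimized over constant vectors at the mean, and since the minimizer is known to be constant, it must be the mean.

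The main point requiring care is the subdifferential calculus. We must justify the sum rule and the Danskin characterization of $\partial \max$, and we must keep the supports of $\mb{a}$ and $\mb{b}$ straight. These supports are disjoint in case (2) but coincide in case (1), and this is exactly why the $\pm 1/\rho$ terms cancel only in case (1).

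The proposition assumes that $n_{\min}$ and $n_{\max}$ are the true indices determined by $\mb{y}^\star$, as in~\eqref{eq:index-constr}. We therefore only need necessity of the optimality conditions, not a search for these indices, and uniqueness of the prox guarantees there is no ambiguity in the result.
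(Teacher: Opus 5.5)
Your proof is correct, but it reaches the formulas by a different mechanism than the paper. The paper avoids subdifferentials altogether. Using Lemma~\ref{lem:order-preserve} and the assumed block structure, it replaces $\widetilde{f}(\mb{y})$ by $y(N)-y(1)$ and calls~\eqref{eq:prox} ``equivalent'' to minimizing $y(N)-y(1)+\frac{\rho}{2}\sum_{n}(y(n)-x(n))^2$ subject to the two block-equality constraints. It then solves this smooth quadratic problem by setting derivatives to zero; case (1) is handled exactly as in your alternative route.

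The paper's route is more elementary, but the equivalence is only asserted. The ordering inequalities in~\eqref{eq:index-constr} are dropped, and $\widetilde{f}(\mb{y})=y(N)-y(1)$ holds on the block-constant subspace only near $\mb{y}^\star$. A short local-to-global convexity remark is therefore needed to conclude that $\mb{y}^\star$ actually solves the reduced problem.

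Your condition $\rho(\mb{x}-\mb{y}^\star)\in\partial\widetilde{f}(\mb{y}^\star)$ is an exact necessary condition on all of $\mathbb{R}^N$, so no such step is needed. It also treats both cases uniformly. It explains the $\pm 1/\rho$ terms as the unit total mass of $\mb{a}$ and $\mb{b}$ on the max and min blocks, which cancel precisely when the blocks coincide.

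Since the objective is convex, your characterization is also sufficient when read in reverse. A candidate $(n_{\min},n_{\max})$ is certified optimal if it satisfies~\eqref{eq:index-constr}, $\rho(x(n)-c_{\max})\ge 0$ on the max block, and $\rho(c_{\min}-x(n))\ge 0$ on the min block. This is the kind of guarantee one would want for the stopping test of Algorithm~\ref{alg:prox-subroutine}, and the paper's reduction does not provide it.
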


\begin{proof}
    In the setting of Prop.~\ref{prop:prox-sol}, by Lemma~\ref{lem:order-preserve}, when $n_{\min}=n_{\max}$, we have $y^\star(1)=y^\star(2)=\cdots=y^\star(N)=y$ and thus problem~\eqref{eq:prox} is equivalent to 
    \begin{align*}
        \min_{y} \frac{1}{2}\sum_{n=1}^{N}\left(y-x(n)\right)^2.
    \end{align*}
    It is clear that the optimal $y$ value is equal to the average of the elements in the vector $\mb{x}$.

    When $n_{\min}<n_{\max}$, problem~\eqref{eq:prox} is equivalent to 
    \begin{equation}
        \begin{aligned}
          & \ \underset{y(1), \ldots, y(N)}{\text{minimize}}
           && y(N) - y(1) + \frac{\rho}{2} \sum_{n = 1}^{N} (y(n) - x(n))^2\\
          & \text{subject to} && y(1) = \cdots = y(n_{\min}),\\
          &  \quad && y(n_{\max}) = \cdots = y(N).
        \end{aligned}
    \end{equation}
    Solving for each $y(n)$ element-wise completes the proof.
\end{proof}
  
\begin{algorithm}[!t]
\caption{Compute $\prox_{\widetilde{f}/\rho}(\mb{x})$}
\label{alg:prox-subroutine}
\begin{algorithmic}[1]
\Require Latency compensation vector $\mb{x}$
\Ensure The proximal mapping $\prox_{\widetilde{f}/\rho}(\mb{x})$, denoted as $\mb{y}^\star$
\State Sort $\mb{x}$ in non-decreasing order and keep track of the original indices
\State Initialize $n_{\min} \gets 1$ and $n_{\max} \gets N$
\While{true}
    \LineComment{Compute $y^\star(n)$ using Proposition \ref{prop:prox-sol}}
    \If{$n_{\max} = n_{\min}$}
        \State Using \eqref{prop:solution:case1} to compute $y^\star(n)$
        \State \textbf{break}
    \EndIf
    \If {\eqref{eq:index-constr} is satisfied}
        \State \textbf{break}
    \Else
        \If{$y^\star(n_{\max} - 1) \geq y^\star(n_{\max})$}
            \State $n_{\max} \gets n_{\max} - 1$
        \ElsIf{$y^\star(n_{\min} + 1) \leq y^\star(n_{\min})$}
            \State $n_{\min} \gets n_{\min} + 1$
        \EndIf
    \State Using \eqref{eq:prox-sol} to compute $y^\star(n)$
    \EndIf
\EndWhile
\State Resort $\mb{y}^\star$ to match the original indices of $\mb{x}$
\State \Return $\text{prox}_{\widetilde{f}/\rho}(\mb{x})$
\end{algorithmic}
\end{algorithm}

Proposition \ref{prop:prox-sol} reveals that computing $\prox_{\widetilde{f}/\rho}(\mb{x})$ is roughly averaging respectively the $n_{\min}$ smallest and $N - n_{\max} + 1$ largest elements of
$\mb{x}$, while leaving middle elements unchanged. This solution makes intuitively sense, as the objective of~\eqref{eq:prox} requires minimizing the max-min gap of $\mb{y}^\star$ while keeping it as close to $\mb{x}$ as possible. 
It remains to find $n_{\min}$ and $n_{\max}$ so that~\eqref{eq:index-constr} holds. 
The structure of~\eqref{eq:prox-sol} suggests we can search for $n_{\min}$ and $n_{\max}$ starting from the two ends towards the center. Specifically, we initialize the search with $n_{\min} = 1$ and $n_{\max} = N$. Then for each pair $(n_{\min}, n_{\max})$, we compute~\eqref{eq:prox-sol} and check if~\eqref{eq:index-constr} is satisfied. If the edge condition 
$y^\star(n_{\max}) > y^\star(n_{\max} - 1)$ (respectively $y^\star(n_{\min} + 1) > y^\star(n_{\min})$) is violated, we decrease $n_{\max}$
(respectively increase $n_{\min}$)  by $1$. Otherwise, the algorithm terminates.

For example, consider $\mb{x}=(1,1,2)$ and $\rho = 0.5$. To calculate the $\prox_{\widetilde{f}/\rho}(\mb{x})$, we first start $n_{\min} = 1$ and $n_{\max} = 3$. Following Proposition \ref{prop:prox-sol}, we have $\mb{y}^\star=(3,1,0)$ and thus $y^\star(n_{\max})<y^\star(n_{\max}-1)$ and $y^\star(n_{\min})>y^\star(n_{\min}+1)$. Hence, we have to increase $n_{\min}$ and decrease $n_{\max}$ by 1, respectively, which leads to $n_{\min} = n_{\max}=2$. Thus, according to Proposition \ref{prop:prox-sol}, we have $y^{\star} =(4/3,4/3,4/3)$. In another example, consider $\mb{x}=(1,2,5)$ and $\rho = 1$. To calculate the $\prox_{\widetilde{f}/\rho}(\mb{x})$, we first start $n_{\min} = 1$ and $n_{\max} = 3$. Following Algorithm 1, we have $\mb{y}^\star=(2,2,4)$, satisfying that $y^\star(1)\leq y^\star(2)\leq y^\star(3)$.

The procedure is summarized in Algorithm~\ref{alg:prox-subroutine}, with the worst-case complexity involving only $N$ scalar additions, divisions, and comparisons. This, combined with the computational complexity of sorting $N$ numbers, results in a total computational complexity of $\mathcal{O}(N\log N)$. 

Combining the above components, the Stochastic Content Scheduling (SCS) algorithm for the server for each content delivery is summarized in Algorithm~\ref{alg:SCS}. The SCS algorithm begins with an initial latency sample $\mathbf{d}_0$ for each user. For each $(k+1)^{th}$ content delivery, the latency compensation is determined by iteratively updating equations \eqref{eqn:ADMM:lambda}, \eqref{eqn:ADMM:y_solution}, and Algorithm $\ref{alg:prox-subroutine}$ over $T_k$ iterations. The content is then transmitted to user $n$ after a delay of $x_{k+1}(n)$ time units, and the latency during this delivery is collected to determine the latency compensation in the next delivery. \wang{This delivery-feedback loop repeats for each content, progressively refining the compensation vector to improve synchronization over time.}

\begin{algorithm}[htbp]
\caption{Synchronized Content Scheduling (SCS)}
\label{alg:SCS}
\begin{algorithmic}[1]
\Require Initial latency compensation vector $\mathbf{x}_0$, initial latency $\mathbf{d}_0$, number of contents $K$, and parameters $\rho$, $\eta$.
\State Initialize $\mathbf{x} \gets \mathbf{x}_0$
    \For{$k = 0$ to $K-1$}
        \State Initialize $\mathbf{y}_k^{0}$, $\mathbf{z}_k^{0}$, and $\boldsymbol{\lambda}_k^{0}$. 
        \For{$t = 0$ to $T_k-1$}
            \State Update $\mathbf{y}_k^{t+1}$ using \eqref{eqn:ADMM:y_solution}
            \State Update $\mathbf{z}_k^{t+1}$ using Algorithm \ref{alg:prox-subroutine} for computing $\prox_{\widetilde{f}/\rho}\left(\mb{y}_k^{t+1} - \frac{1}{\rho} \boldsymbol{\lambda}_k^{t}\right)$
            \State Update $\boldsymbol{\lambda}_k^{t+1}$ using \eqref{eqn:ADMM:lambda}
        \EndFor
        \State Compute the latency compensation $\mb{x}_{k+1} \gets \mb{y}^{T_{k}}_k - \mb d_k$
        \State Transmit the $(k+1)^{th}$ content to user $n$ after a delay of $x_{k+1}(n)$ time units, and collect the latency vector $\mathbf{d}_{k+1}$ during $(k+1)^{th}$ delivery
    \EndFor
\end{algorithmic}
\end{algorithm}

\section{Convergence Analysis}
\label{sec:convergence}

This section provides the convergence analysis of Algorithm~\ref{alg:SCS}. For convenience,  define $g(\mb{x};\mb{d}_{k}) \triangleq f(\mb{x};\mb{d}_{k}) + \mathbb{I}_{\mathbb{R}_+} (\mb{x})$ with $\mathbb{I}_{\mathbb{R}_+} $ being the indicator function of the nonnegative orthant. Specifically, $\mathbb{I}_{\mathbb{R}_+}(\mb{x})$ equals to $0$ if $\mb{x}\in\mathbb{R}_+$ and $+\infty$ if $\mb{x}\notin\mathbb{R}_+$. 
Since ADMM is an iterative algorithm that converges asymptotically, in practice, one cannot solve exactly the subproblem~\eqref{p:spp-subprob}. Therefore, we consider solving it inexactly by finding  an $\epsilon_k$-minimizer, i.e., $\mb{x}_{k+1}$  satisfies
\begin{align} 
\label{spp-subprob:inexact}
&g(\mb{x}_{k+1};\mb{d}_{k}) + \frac{1}{2\gamma_{k}} \left\|\mathbf{x}_{k+1} - \mathbf{x}_{k}\right\|^2 \nonumber \\ 
& \leq \underset{\mathbf{x}}{\min} \Bigg\{g(\mb{x};\mb{d}_{k}) + \frac{1}{2\gamma_{k}} \left\|\mathbf{x} - \mathbf{x}_{k}\right\|^2 \Bigg\} + \epsilon_{k}. 
\end{align}

Let $G(\mb{x}) \triangleq F(\mb{x}) + \mathbb{I}_{\mathbb{R}_+} (\mb{x})$. We note that the optimal solution to $\min_{\mb{x}\succeq\mb{0}}F(\mb{x})$ is equivalent to the optimal solution to $\min_{\mb{x}}G(\mb{x})$. Next, we show that our proposed SCS algorithm converges under suitably chosen $\epsilon_k$ in each iteration $k$, as stated in the following proposition.

\begin{proposition}
\label{Proposition convergence}
Let \(\{\mb{x}_k\}_{k\geq 0}\) be the sequence generated by Algorithm~\ref{alg:SCS}, and let \(\mb{x}^\star\) be a minimizer of \(G\). Then, for all $K\geq 1$, it holds  
    \begin{align*}
        \mathbb{E}\left[G(\overline{\mb{x}}_K)\right] - G(\mb{x}^\star)  \quad \leq \frac{\left(\|\mb{x}_0 - \mb{x}^\star\| + 2A_K + \sqrt{2B_K}\right)^2}{2 \sum_{k=0}^{K} \gamma_k},
    \end{align*}
    where $\overline{\mb{x}}_K \triangleq \frac{1}{\sum_{k=0}^{K} \gamma_k} \sum_{k=0}^{K} \gamma_k \mb{x}_k$, $A_K \triangleq \sum_{k=0}^{K} \sqrt{2\gamma_{k}\epsilon_{k}}$, and $B_K \triangleq \sum_{k=0}^{K}\left(\gamma_{k}\epsilon_{k} + \gamma_{k}^2 \mathbb{E}[\|f'(\mb{x}^\star;\mb{d}_k)\|^2]/2\right)$.

    Moreover, suppose the stepsizes satisfy $\gamma_k = \gamma_0 / \sqrt{K}$ for some $\gamma_0 > 0$, and the $\epsilon_k$ is decreased as $\mathcal{O}(1/k^p)$  with $p > 2$. Then we see that 
    \[
    \mathbb{E}[G(\overline{\mb{x}}_K)] - G(\mb{x}^\star) \leq \mathcal{O}\left(\frac{1}{\sqrt{K}}\right).
    \]
\end{proposition}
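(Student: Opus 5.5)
The plan is the standard one-step analysis of the inexact stochastic proximal point method (in the spirit of \cite{asi2019stochastic} and inexact proximal point analyses), followed by an inexactness bookkeeping argument that yields the $\left(\|\mb{x}_0-\mb{x}^\star\|+2A_K+\sqrt{2B_K}\right)^2$ form.

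\textbf{Step 1: one-step inequality.} Let $\hat{\mb{x}}_{k+1}$ denote the exact minimizer of the subproblem $\phi_k(\mb{x})\triangleq g(\mb{x};\mb{d}_k)+\frac{1}{2\gamma_k}\|\mb{x}-\mb{x}_k\|^2$. The function $\phi_k$ is $1/\gamma_k$-strongly convex, so \eqref{spp-subprob:inexact} gives $\|\mb{x}_{k+1}-\hat{\mb{x}}_{k+1}\|\le\sqrt{2\gamma_k\epsilon_k}$. Strong convexity also gives, for every $\mb{x}$, $\phi_k(\mb{x})\ge\phi_k(\hat{\mb{x}}_{k+1})+\frac{1}{2\gamma_k}\|\mb{x}-\hat{\mb{x}}_{k+1}\|^2$. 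Taking $\mb{x}=\mb{x}^\star$ and using $\phi_k(\hat{\mb{x}}_{k+1})\ge\phi_k(\mb{x}_{k+1})-\epsilon_k$ yields
\[
2\gamma_k\big(g(\mb{x}_{k+1};\mb{d}_k)-g(\mb{x}^\star;\mb{d}_k)\big)\le \|\mb{x}_k-\mb{x}^\star\|^2-\|\hat{\mb{x}}_{k+1}-\mb{x}^\star\|^2-\|\mb{x}_{k+1}-\mb{x}_k\|^2+2\gamma_k\epsilon_k .
\]
The left side involves $\mb{x}_{k+1}$, which depends on $\mb{d}_k$. To take expectations, I would replace $g(\mb{x}_{k+1};\mb{d}_k)$ by $g(\mb{x}_k;\mb{d}_k)$. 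Convexity of $g(\cdot;\mb{d}_k)$ with a subgradient $f'(\mb{x}^\star;\mb{d}_k)$ at $\mb{x}^\star$ gives $g(\mb{x}_{k+1};\mb{d}_k)\ge g(\mb{x}^\star;\mb{d}_k)+\langle f'(\mb{x}^\star;\mb{d}_k),\mb{x}_{k+1}-\mb{x}^\star\rangle$. Combining this with the same bound at $\mb{x}_k$, the difference is controlled by $\langle f'(\mb{x}^\star;\mb{d}_k),\mb{x}_{k+1}-\mb{x}_k\rangle$. Young's inequality then absorbs this against $-\|\mb{x}_{k+1}-\mb{x}_k\|^2$, leaving $\gamma_k^2\|f'(\mb{x}^\star;\mb{d}_k)\|^2/2$ (up to constants). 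Since $\mb{x}_k$ is independent of $\mb{d}_k$, the conditional expectation gives $\mathbb{E}[g(\mb{x}_k;\mb{d}_k)]=\mathbb{E}[G(\mb{x}_k)]$.

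The step I expect to be the main obstacle is keeping the function-value term valid. $G(\mb{x}_k)-G(\mb{x}^\star)$ must stay nonnegative and finite, which requires the iterates to remain feasible, i.e.\ $\mb{x}_k\succeq\mb{0}$. Algorithm~\ref{alg:SCS} takes $\mb{y}^{T_k}_k\succeq\mb{d}_k$, so feasibility holds. The subgradient at $\mb{x}^\star$ must also be chosen so that it is a valid subgradient of $g$, which is why $f'(\mb{x}^\star;\mb{d}_k)$ rather than the normal-cone part appears. This step is where $B_K$ originates.

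\textbf{Step 2: summation with the inexactness terms.} Write $\|\hat{\mb{x}}_{k+1}-\mb{x}^\star\|\ge\|\mb{x}_{k+1}-\mb{x}^\star\|-\sqrt{2\gamma_k\epsilon_k}$. Set $r_k\triangleq\sqrt{\mathbb{E}\|\mb{x}_k-\mb{x}^\star\|^2}$ and $S_k\triangleq 2\sum_{j<k}\gamma_j(\mathbb{E}G(\mb{x}_j)-G^\star)$. Summing the one-step inequality gives a recursion of the form
\[
r_{k}^2+S_{k}\le r_0^2+2B_{K}+2\sum_{j<k}\sqrt{2\gamma_j\epsilon_j}\,r_{j+1}.
\]
The standard lemma for such recursions (as used for inexact proximal methods, e.g.\ Schmidt--Le Roux--Bach) gives $r_k\le r_0+2A_K+\sqrt{2B_K}$, and hence $S_{K}\le (r_0+2A_K+\sqrt{2B_K})^2$. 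Applying Jensen's inequality to the weighted average $\overline{\mb{x}}_K$, using convexity of $G$, gives the first claim.

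\textbf{Step 3: the rate.} With $\gamma_k=\gamma_0/\sqrt{K}$, we have $\sum_{k}\gamma_k\ge\gamma_0\sqrt{K}$. Since $\epsilon_k=\mathcal{O}(k^{-p})$ with $p>2$, $A_K\le\sqrt{2\gamma_0}\sum_k\sqrt{\epsilon_k}=\mathcal{O}(1)$ because $p/2>1$. Likewise, $B_K\le\gamma_0\sum_k\epsilon_k+\frac{\gamma_0^2}{2K}(K+1)\sigma^2=\mathcal{O}(1)$, where $\sigma^2\triangleq\mathbb{E}\|f'(\mb{x}^\star;\mb{d})\|^2$. This quantity is bounded because $d(n)\le\tau_{\max}$ and each $c_n$ is smooth, so the subgradients are uniformly bounded on the relevant compact set. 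Hence the numerator is $\mathcal{O}(1)$ and the bound is $\mathcal{O}(1/\sqrt{K})$.
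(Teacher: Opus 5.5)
Your overall route is the paper's: an inexact one-step inequality, then swapping $g(\mb{x}_{k+1};\mb{d}_k)$ for $g(\mb{x}_k;\mb{d}_k)$ via a subgradient at $\mb{x}^\star$ plus Young, then conditional expectation, the Schmidt--Le Roux--Bach recursion, and Jensen. Your handling of inexactness through strong convexity and $\hat{\mb{x}}_{k+1}$ is a tidy replacement for the paper's $\epsilon$-subdifferential lemma and gives the same recursion. The gap is the swap. The two inequalities you combine, $g(\mb{x}_{k+1};\mb{d}_k)-g(\mb{x}^\star;\mb{d}_k)\ge\langle f'(\mb{x}^\star;\mb{d}_k),\mb{x}_{k+1}-\mb{x}^\star\rangle$ and its analogue at $\mb{x}_k$, are both lower bounds. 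To get $g(\mb{x}_{k+1};\mb{d}_k)\ge g(\mb{x}_k;\mb{d}_k)+\langle f'(\mb{x}^\star;\mb{d}_k),\mb{x}_{k+1}-\mb{x}_k\rangle$ you would need the upper bound $g(\mb{x}_k;\mb{d}_k)-g(\mb{x}^\star;\mb{d}_k)\le\langle f'(\mb{x}^\star;\mb{d}_k),\mb{x}_k-\mb{x}^\star\rangle$, which is the subgradient inequality reversed. The paper's proof makes the identical move: after the Asi--Duchi splitting it asserts $\langle g'(\mb{x}^\star;\mb{d}_k),\mb{x}^\star-\mb{x}_k\rangle\le-(g(\mb{x}_k;\mb{d}_k)-g(\mb{x}^\star;\mb{d}_k))$ ``by convexity''. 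So matching the paper is no comfort. The step genuinely fails. Take deterministic $\mb{d}=(0,1)$, $w=0$, $\mb{x}_0=\mb{0}$, $\gamma_k\equiv 1$, and exact subproblems ($\epsilon_k=0$). Then $f(\mb{x};\mb{d})=|x(1)-x(2)-1|$, the first prox step gives $\mb{x}_1=(1,0)=\mb{x}^\star$, the iterates stay there, and $0\in\partial f(\mb{x}^\star;\mb{d})$. With that subgradient your post-swap inequality at $k=0$ reads $2\le 1$. Likewise the proposition's first bound would give $G(\overline{\mb{x}}_K)\le\frac{1}{2(K+1)}$, whereas in fact $G(\overline{\mb{x}}_K)=\frac{1}{K+1}$.

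The repair is to linearize at the iterate rather than at $\mb{x}^\star$. Since $\mb{x}_k,\mb{x}_{k+1}\succeq\mb{0}$, you have $g(\mb{x}_{k+1};\mb{d}_k)\ge g(\mb{x}_k;\mb{d}_k)+\langle f'(\mb{x}_k;\mb{d}_k),\mb{x}_{k+1}-\mb{x}_k\rangle$. The cross term satisfies $-2\gamma_k\langle f'(\mb{x}_k;\mb{d}_k),\mb{x}_{k+1}-\mb{x}_k\rangle\le\gamma_k^2\|f'(\mb{x}_k;\mb{d}_k)\|^2+\|\mb{x}_{k+1}-\mb{x}_k\|^2$, and the last piece is absorbed by the $-\|\mb{x}_{k+1}-\mb{x}_k\|^2$ already in your three-point inequality. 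Your Steps 2 and 3 then go through unchanged, except that $B_K$ must be built from $\mathbb{E}\|f'(\mb{x}_k;\mb{d}_k)\|^2$ (or a uniform bound on it), not from a subgradient at $\mb{x}^\star$. The $\mathcal{O}(1/\sqrt{K})$ rate then needs that uniform bound. The max--min part contributes at most $2$, but the cost part requires $c_n'$ to stay bounded along the iterates. This is automatic for linear $c_n$; otherwise you need an argument that the iterates remain bounded. Your compactness remark covers only the fixed point $\mb{x}^\star$, and the paper's Lemma~\ref{lemma:subgrad} tacitly assumes $x(n)+d(n)\le\tau_{\max}$.
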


\begin{proof}
We start from the optimality condition of the subproblem~\eqref{spp-subprob:inexact}. 
   By~\cite[Lemma 2]{schmidt2011convergence}, for any $\mb{x}_{k+1}$ satisfying~\eqref{spp-subprob:inexact} there exists some vector $\mb{e}_k$ such that $\|\mb{e}_k\| \leq \sqrt{2 \gamma_k \epsilon_k}$ and 
    \begin{align}
         - \frac{1}{\gamma_k}(\mb{x}_{k+1} - \mb{x}_{k} + \mb{e}_k) \in \partial_{\epsilon_k} g(\mb{x}_{k+1}; \mb{d}_k),
    \end{align}
    where $\partial_{\epsilon}g(\mb{x};\mb{d})$ is the $\epsilon$-subdifferential set of $g( \cdot; \mb{d})$ at $\mb{x}$ defined as 
    \begin{equation*}
        \partial_{\epsilon}g(\mb{x};\mb{d}) \triangleq \{ \boldsymbol{\xi} \,|\, g(\mb{y};\mb{d}) \geq g(\mb{x};\mb{d}) + \boldsymbol{\xi}^T (\mb{y}  - \mb{x}) - \epsilon, \forall \mb{y}\}.
    \end{equation*}

Letting $\mb{y} = \mb{x}^\star$, $\mb{x} = \mb{x}_{k+1}$ and $\boldsymbol{\xi} =  - \frac{1}{\gamma_k} \left(\mb{x}_{k+1} - \mb{x}_{k} + \mb{e}_k \right)$ in $\partial_{\epsilon}g(\mb{x};\mb{d})$, thus gives 
    \begin{equation}
        \begin{split}
        &g(\mb{x}_{k+1};\mb{d}_k) \\
         \leq \ & g(\mb{x}^\star;\mb{d}_k) + \Big \langle \frac{\mb{x}_{k+1} - \mb{x}_{k} + \mb{e}_k }{\gamma_k}, \mb{x}^\star - \mb{x}_{k+1}  \Big \rangle + \epsilon_{k} \\   
         = \ & g(\mb{x}^\star;\mb{d}_k) + \frac{1}{2\gamma_k}\|\mb{x}_k-\mb{x}^\star\|^2 - \frac{1}{2\gamma_k}\|\mb{x}_{k+1}-\mb{x}^\star\|^2  \\
          &+ \Big \langle \mb{x}^\star-\mb{x}_{k+1},  \frac{\mb{e}_k}{\gamma_k} \Big \rangle -\frac{1}{2\gamma_k}\|\mb{x}_{k+1}-\mb{x}_k\|^2  + \epsilon_k\\
          \leq \ & g(\mb{x}^\star;\mb{d}_k) + \frac{1}{2\gamma_k}\|\mb{x}_k-\mb{x}^\star\|^2 - \frac{1}{2\gamma_k}\|\mb{x}_{k+1}-\mb{x}^\star\|^2 \\
          &+ \frac{1}{\gamma_k} \|\mb{e}_k\|\|\mb{x}^\star-\mb{x}_{k+1}\|  -\frac{1}{2\gamma_k}\|\mb{x}_{k+1}-\mb{x}_k\|^2 + \epsilon_k \\
          \leq \ & g(\mb{x}^\star;\mb{d}_k) + \frac{1}{2\gamma_k}\|\mb{x}_k-\mb{x}^\star\|^2 - \frac{1}{2\gamma_k}\|\mb{x}_{k+1}-\mb{x}^\star\|^2\\
          &+ \sqrt{\frac{2\epsilon_k}{\gamma_k}}\|\mb{x}^\star-\mb{x}_{k+1}\|  -\frac{1}{2\gamma_k}\|\mb{x}_{k+1}-\mb{x}_k\|^2 + \epsilon_k
        \end{split}
        \label{proof:bound:v1}
    \end{equation}
where the third step follows from Cauchy–Schwarz inequality.

Rearranging terms on \eqref{proof:bound:v1}, we can get for any subgradient $g'(\mb{x}^\star; \mb{d}_k) \in \partial g(\mb{x}^\star; \mb{d}_k)$\footnote{The \textit{subdifferential} of $f$ at $x$, denoted by $ \partial f(x) $, is defined as  $\partial f(x) \triangleq \left\{ \xi | f(y) \geq f(x) + \langle \xi, y - x \rangle, \forall y \in \text{dom}f \right\}$, and $\text{dom}f$ is the domain of $f$.
} it must be
\begin{align*}
    \begin{split}
         & \frac{1}{2\gamma_k}\|\mb{x}_{k+1}-\mb{x}^*\|^2 \\
          \leq &   \  \big(g(\mb{x}^\star; \mb{d}_k) - g(\mb{x}_{k+1}; \mb{d}_k)\big) + \frac{1}{2 \gamma_k}\|\mb{x}_{k}-\mb{x}^*\|^2\\
          & + \sqrt{\frac{2 \epsilon_k}{\gamma_k}} \|\mb{x}_{k+1} - \mb{x}^\star\| - \frac{1}{2\gamma_{k}} \|\mathbf{x}_{k+1} - \mathbf{x}_{k}\|^2 + \epsilon_k  \\
          \leq &  \ \langle g'(\mb{x}^\star; \mb{d}_k), \mb{x}^\star - \mb{x}_k  \rangle + \frac{1}{2 \beta} \|\mb{x}_k - \mb{x}_{k+1}\|^2   \\
            & + \frac{\beta}{2}\|g'(\mb{x}^\star;\mb{d}_k)\|^2 + \frac{1}{2 \gamma_k}\|\mb{x}_{k}-\mb{x}^*\|^2 + \epsilon_k    \\
            & + \sqrt{\frac{2 \epsilon_k}{\gamma_k}} \|\mb{x}_{k+1} - \mb{x}^\star\| - \frac{1}{2\gamma_{k}} \|\mathbf{x}_{k+1} - \mathbf{x}_{k}\|^2\\
          \leq &  \ - \big(g(\mb{x}_{k};\mb{d}_k) - g(\mb{x}^\star;\mb{d}_k)\big) + \frac{1}{2 \beta} \|\mb{x}_k - \mb{x}_{k+1}\|^2  \\
            & + \frac{1}{2 \gamma_k}\|\mb{x}_{k}-\mb{x}^*\|^2  + \sqrt{\frac{2 \epsilon_k}{\gamma_k}} \|\mb{x}_{k+1} - \mb{x}^\star\| + \epsilon_k  \\
            &- \frac{1}{2\gamma_{k}} \|\mathbf{x}_{k+1} - \mathbf{x}_{k}\|^2  + \frac{\beta}{2}\|g'(\mb{x}^\star;\mb{d}_k)\|^2 
    \end{split}
\end{align*}
where the second step follows \cite[Lemma 3.2]{asi2019stochastic}, $\beta > 0$ is an adjustable constant, and the last step is by the first-order conditions of convexity. Choosing $\beta = \gamma_k$ yields 
\begin{align}
    \label{eq:descent}
    &\frac{1}{2\gamma_k}\|\mb{x}_{k+1}-\mb{x}^\star\|^2 \nonumber\\
    \leq & - \big(g(\mb{x}_{k};\mb{d}_k,) - g(\mb{x}^\star;\mb{d}_k)\big) + \frac{1}{2\gamma_k}\|\mb{x}_{k}-\mb{x}^\star\|^2  \nonumber\\ 
    & +\frac{\gamma_k}{2} \|g'(\mb{x}^\star;\mb{d}_k)\|^2 + \sqrt{\frac{2\epsilon_k}{\gamma_k}} \|\mb{x}_{k+1} - \mb{x}^\star\| + \epsilon_{k}. 
\end{align}

Define filtration $\mathcal{F}_k \triangleq \sigma(\mb{d}_1,...,\mb{d}_k)$. It holds $\mb{x}_{k} \in \mathcal{F}_{k-1}$. 
Taking the conditional expectation of~\eqref{eq:descent} on both sides w.r.t. $\mathcal{F}_{k-1}$ we have
\begin{align*} \label{proof:conditonexpect}
    \begin{split}
    & \frac{1}{2\gamma_k}\mathbb{E}[\|\mb{x}_{k+1} - \mb{x}^\star\|^2 | \mathcal{F}_{k-1}] \\
    \leq &  - \big(G(\mb{x}_{k}) - G(\mb{x}^\star)\big)  +\frac{\gamma_k}{2} \mathbb{E}[\|g'(\mb{x}^\star; \mb{d}_k)\|^2 ]\\
    & + \sqrt{\frac{2\epsilon_k}{\gamma_k}} \mathbb{E}[\|\mb{x}_{k+1} - \mb{x}^\star\|^2 | \mathcal{F}_{k-1}] + \frac{1}{2\gamma_k} \|\mb{x}_{k}-\mb{x}^\star\|^2 + \epsilon_{k},
    \end{split}
\end{align*}
where we use the fact that $\mb{d}_k$ is independent of $\mathcal{F}_{k-1}$.

Rearranging terms,  multiplying $2\gamma_k$ to both sides of the inequality, and taking expectation w.r.t. $\mathcal{F}_{k-1}$, we obtain
\begin{align}
\begin{split}
        &2 \gamma_k \mathbb{E}[G(\mb{x}_{k})- G(\mb{x}^\star)] + \mathbb{E}[\|\mb{x}_{k+1}-\mb{x}^\star\|^2] 
    \\ \leq \ & \mathbb{E}[\|\mb{x}_{k}-\mb{x}^\star\|^2] + 2\sqrt{2\epsilon_{k}\gamma_k} \mathbb{E}[\|\mb{x}_{k+1}-\mb{x}^\star\|] 
    \\ &+ \gamma_k^2 \mathbb{E}[\|g'(\mb{x}^\star; \mb{d}_k)\|^2] + 2\gamma_k\epsilon_{k}.
\end{split}
\end{align}

Summing up over $k=0,...,K$, and using telescopic cancellation, we have the result
\begin{align}\label{eq:potential-function}
    &\sum_{k=0}^{K} 2\gamma_k \mathbb{E}[G(\mb{x}_k) - G(\mb{x^\star})] + \mathbb{E}[\|\mb{x}_{K+1} -\mb{x}^\star\|^2] \nonumber \\ 
    \leq& \ \mathbb{E}[\|\mb{x}_0 -\mb{x}^\star\|^2] + \sum_{k=0}^{K} {\gamma_k^2} \mathbb{E}[\|g'(\mb{x}^\star; \mb{d}_k)\|^2] \nonumber \\ 
    &+ \sum_{k=0}^{K}\left(2\sqrt{2\epsilon_{k}\gamma_k} \mathbb{E}[\|\mb{x}^\star-\mb{x}_{k+1}\|] + 2\gamma_k\epsilon_{k}\right).
\end{align}
Note that~\eqref{eq:potential-function} holds for any subgradient $g'(\mb{x}^\star; \mb{d}_k)$. Since $\partial g (\mb{x}^\star; \mb{d}_k) = \partial f (\mb{x}^\star; \mb{d}_k) + N_{\mathbb{R}_+} (\mb{x})$ with $N_{\mathbb{R}_+} (\mb{x}^\star)$ being the normal cone of $\mathbb{R}_+$ at $\mb{x}^\star$~\cite[Thm. 2.26]{royset2022optimization}, and $\mb{x}^\star$ must be nonnegative,  we can always pick $g'(\mb{x}^\star, \mb{d}_k)$ such that $\mathbb{E}[\|g'(\mb{x}^\star; \mb{d}_k)\|^2] = \mathbb{E}[\|f'(\mb{x}^\star; \mb{d}_k)\|^2] $, which is bounded by Lemma~\ref{lemma:subgrad}. 

The rest of the proof follows~\cite{schmidt2011convergence} with a slight modification. Denote $A_K \triangleq \sum_{k=0}^{K} \sqrt{2\gamma_{k}\epsilon_{k}}$ and $B_K \triangleq \sum_{k=0}^{K}\left(\gamma_{k}\epsilon_{k} + \gamma_{k}^2 \mathbb{E}[\|f'(\mb{x}^\star;\mb{d}_k)\|^2]/2\right)$. We will have 
\begin{align}
\sum_{k=0}^{K}\gamma_k\mathbb{E}\big[G(\mb{x}_k) - G(\mb{x}^\star)\big] \leq \frac{\left(\|\mb{x}_0 - \mb{x}^\star\| + 2A_K + \sqrt{2B_K}\right)^2}{2}  \nonumber.
\end{align}
As $G(\cdot)$ is convex, with the Jensen's inequality, we get,
\begin{align}
    \label{proof:convg:bound:v4}
    \mathbb{E}[G(\overline{\mb{x}}_K)] -  G(\mb{x}^\star) \leq \frac{(\|\mb{x}_0- \mb{x}^\star\| + 2A_K + \sqrt{2B_K})^2}{2\sum_{k=0}^{K} \gamma_{k}} \nonumber
\end{align}
where $\overline{\mb{x}}_K \triangleq \frac{1}{\sum_{k=0}^{K} \gamma_k}{\sum_{k=0}^{K}} \gamma_{k} \mb{x}_k$.

Set the stepsize  $\gamma_k \triangleq \gamma_0/\sqrt{K}$ for $k\geq 0$ and $\epsilon_k$ is decreased as $\mathcal{O}(1/k^p)$, the desired result follows from the following Lemma.

\begin{lemma}
\label{lemma:subgrad}
Let $\partial f(\mathbf{x}; \mathbf{d})$ denote the subdifferential of the function $f(\cdot\,; \mathbf{d})$ as defined in~\eqref{eqn:obj}, evaluated at $\mathbf{x}$. For some constant $L \geq 0$, and for any subgradient $f'(\mathbf{x}; \mathbf{d}) \in \partial f(\mathbf{x}; \mathbf{d})$, it holds 
    \begin{equation}
    \mathbb{E}\left[\left\| f'(\mathbf{x}; \mathbf{d}) \right\|^2\right]
    \leq 4 +  2Nw^2 L ^2.
\end{equation}
\end{lemma}
The proof is available in Appendix~\ref{Appendix:subdifferential}.
\end{proof}

Since linearized ADMM converges at a rate of $\mathcal{O}(1/t)$ for convex problems~\cite[Thm. 2.5]{ouyang2015accelerated},
the iteration number $T_k$ in Algorithm~\ref{alg:SCS} can be set as $\Omega(k^p)$ to ensure that $\epsilon_k=\mathcal{O}(1/k^p)$.

\section{Synchronization with Packet Drops}

\label{sec:dropping}
\begin{algorithm}[!t] 
\caption{Dropping-Case Variant of SCS (D-SCS)}
\label{alg:SCS_dropping}
\begin{algorithmic}[1]
\Require Initial latency vector $\mathbf{d}_0$, acknowledged set $\mathcal{H}_0$, zero compensation $\mathbf{x}_0 = \widetilde{\mathbf{x}}_0$, and maximum waiting time $\tau_{\max}$.
    \For{$k = 1$ to $K$}, 
        \State Set $\mathbf{x}_k \leftarrow \widetilde{\mathbf{x}}_k$
        \State Solve the subproblem over $\mathcal{H}_k$:
        \begin{equation}
            \mb{x}_{k+1} = \underset{\mb{x} \succeq \mb{0}}{\argmin} \left\{ f(\mb{x}; \mb{d}_k, \mathcal{H}_k) + \frac{1}{2\gamma_k} \|\mb{x} -\mb{x}_k\|_{\mc{H}_k}^2 \right\}
            \label{eqn:SPP:dropping}
        \end{equation}
        \State Determine the postpone time of the $(k+1)^{th}$ delivery $\widetilde{\mb{x}}_{k+1}\triangleq (\widetilde{x}_{k+1}(n))_{n=1}^N$:
        \begin{align*}
            \widetilde{x}_{k+1}(n) \gets
            \begin{cases}
            x_{k+1}(n), \quad &\text{if }  n \in \mathcal{H}_k \\
            \widetilde{x}_{k}(n), \quad &\text{otherwise } 
            \end{cases}
        \end{align*}
        \State Deliver the $(k+1)^{th}$ content to each user $n$ after a delay of $\widetilde{x}_{k+1}(n)$ time units.
        \State Wait for ACKs up to time limit $\tau_{\max}$, determine the acknowledged set $\mathcal{H}_{k+1}$, and observe $\mathbf{d}_{k+1}$ latency.
    \EndFor 
\end{algorithmic}
\end{algorithm}

We extend the framework to the dropping case, where only a subset of users may receive each delivery due to packet losses. In practice, a server expects an acknowledgment (ACK) from each user within a maximum tolerable latency $\tau_{\max}$ (e.g., $2$ or $3$ frames in an MR system). If no ACK is received in time, the user is considered to have missed the content. Let $I_k(n) \in \{0, 1\}$ indicate whether user $n$ successfully receives the $k^{th}$ content, with $I_k(n) = 1$ implying content $k$ having been successfully delivered to user $n$ and $I_k(n) = 0$ otherwise. We assume that the dropping event $I_k(n)$ is i.i.d. across delivery rounds $k$ and independent across users $n$. Define the corresponding acknowledged set as $$\mathcal{H}_k \triangleq \left\{n\in[N]: I_k(n)=1\right\},$$ which captures the users for whom content delivery succeeds in the $k^{th}$ delivery. Note that the acknowledged set $\mathcal{H}_k$ is i.i.d. across $k$, as each set is generated from the independent i.i.d. dropping event $I_k(n)$. In this setting, we aim to maintain synchronization within the acknowledged set $\mathcal{H}_k$ by minimizing the latency discrepancy while avoiding excessive latency cost. Since both the latency $\mb{d}_k$ and the acknowledged set $\mathcal{H}_k$ are i.i.d., we omit the subscript $k$ in the optimization formulation for notational simplicity. This leads to the following stochastic optimization problem,
\begin{equation}
    \operatorname*{\min}_{\mathbf{x} \succeq 0} F(\mathbf{x}) 
    \triangleq \mathbb{E} \left[ f(\mathbf{x}; \mathbf{d}, \mathcal{H}) \right],
    \label{system:obj:dropping}
\end{equation}
\wang{
where we define the synchronization metric as 
\begin{multline}
    f(\mathbf{x}; \mathbf{d}, \mathcal{H}) \triangleq \max_{n\in \mathcal{H}} (x(n) + d(n)) - \min_{n\in \mathcal{H}} (x(n) + d(n))  \\ + w \sum_{n\in \mathcal{H}} c_n(x(n) + d(n))
    \label{system:obj:dropping:instant}
\end{multline}
}
The dropping case~\eqref{system:obj:dropping} differs from the full-delivery case~\eqref{eqn:obj} in the randomness of the $\mc{H}$, making the objective dependent on both the random latency vector $d$ and the acknowledged set $\mc{H}$. Notably, when the network conditions are stable and the dropping probability satisfies $\mathbb{P}[I_k = 0] = 0$ for all $n\in[N]$, the formulation reduces to the original full-delivery model with $\mc{H} = [N]$. 

To accommodate dynamic network conditions with packet drops, we extend the SCS algorithm, named D-SCS, as detailed in Algorithm~\ref{alg:SCS_dropping}. \wang{Note that the waiting step is a feedback cutoff, not a blocking operation: the server continues rendering and transmitting contents using the latest compensation vector, while $\tau_{\max}$ only determines which ACKs are used to update the next compensation decision.} In each round, the proximal update is computed only for the acknowledged users in $\mathcal{H}_k$, leaving compensation values for $n \notin \mathcal{H}_k$ unchanged. \wang{Specifically, for users $n \notin \mathcal{H}_k
$, no ACK is received within $\tau_{\max}$, so their compensation values are carried over as  $\widetilde{x}_{k+1}(n) = \widetilde{x}_{k}(n)$, following step 4 of Algorithm~\ref{alg:SCS_dropping}.} Thus, the resulting subproblem is now restricted to the smaller subset $\mc{H}_k \subseteq [N]$. As demonstrated in later simulations (see \ref{sim:dropping_case}), this modification preserves the algorithm's ability to maintain synchronization under realistic drop rates. The empirical results strongly suggest that the convergence behavior of D-SCS is preserved. However, formal analysis remains challenging, as packet drops induce random, time-varying coordinate updates that break the telescoping structure used in the SCS convergence proof and obscure the influence of the dropping probability. A rigorous theoretical treatment of D-SCS is thus a promising direction for future work.

\section{Simulations}  
\label{sec:sim}

In this section, we first conduct synthetic simulations to validate the performance of our proposed synchronized scheduling algorithm. Then, we demonstrate the superiority of our algorithm using two real-world latency datasets: one from network latencies in Seattle \cite{zhu2016network} and the other from content delivery latency in MR-based firefighter training \cite{dong2023collaborative}. Additionally, we show that the proposed algorithm can handle the dropping case in realistic network conditions. \wang{Throughout, we use the improvement ratio (defined in~\eqref{eqn:improveratio}) as the primary performance metric, measuring the relative reduction in the objective function compared to the no-compensation case.}

\textbf{Hardware}. The simulations are conducted on a Dell Precision 7920 workstation equipped with Intel(R) Xeon(R) Gold $6242$R CPU @3.10Ghz, and $128$ GB RAM.

\textbf{Baseline}. We compare our proposed algorithm with the following three baselines.

\textit{SRA:} Simple running average (SRA) is a heuristic latency compensation method. It determines latency compensation by solving the optimization problem $\eqref{eqn:heuristic}$ using the running average of 
\wang{all previous latency samples.}

\textit{ERA:} \wang{Exponential running average (ERA) solves $\eqref{eqn:heuristic}$ using an exponentially weighted running average of latency samples, where the newest latency sample receives weight 0.1 and the previous estimate receives weight 0.9.}

\textit{VAR:} Variance-minimization (VAR) determines latency compensation by minimizing the compensated latencies across all users rather than the stricter max-min gap\wang{,  while keeping the same latency-cost term.} 

\textit{SSGD:} \wang{Stochastic subgradient descent (SSGD) is a first-order baseline that directly updates the compensation vector using a stochastic subgradient of $f(\mb{x};\mb{d}_k)$ at each delivery round, followed by projection onto $\mb{x}\succeq \mb{0}$.}

\wang{}

\begin{figure*}[!t]
    \centering
    \begin{subfigure}[b]{0.24\textwidth}
        \centering
        \includegraphics[width=\linewidth]{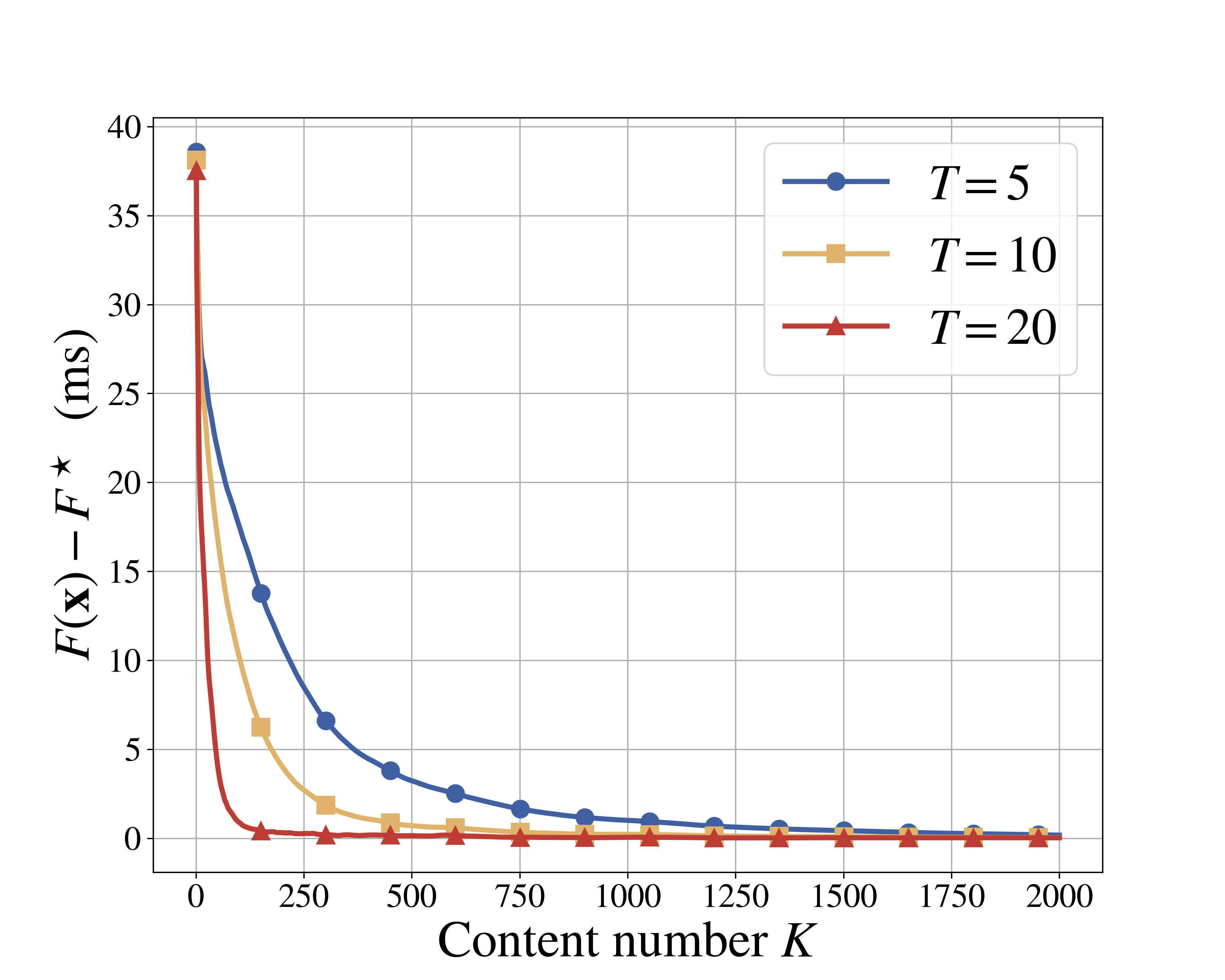}
        \caption{Truncated normal ($N=5$)}
        \label{fig:exp:syn:conv:TG:N5}
    \end{subfigure}
    \begin{subfigure}[b]{0.24\textwidth}
        \centering
        \includegraphics[width=\linewidth]{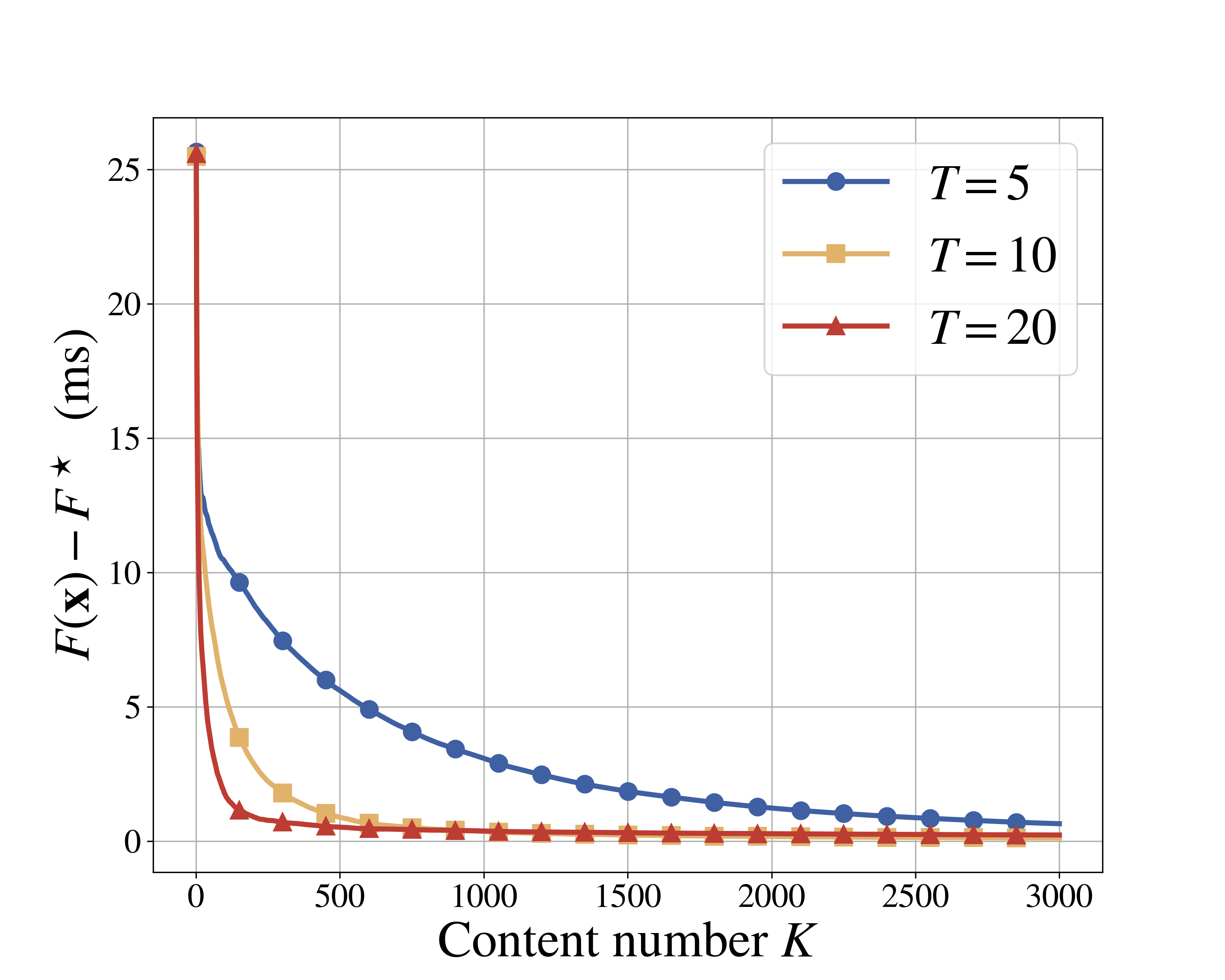}
        \caption{Truncated normal ($N=20$)}
        \label{fig:exp:syn:conv:TG:N20}
    \end{subfigure}
    \hfill
    \begin{subfigure}[b]{0.24\textwidth}
        \centering
        \includegraphics[width=\linewidth]{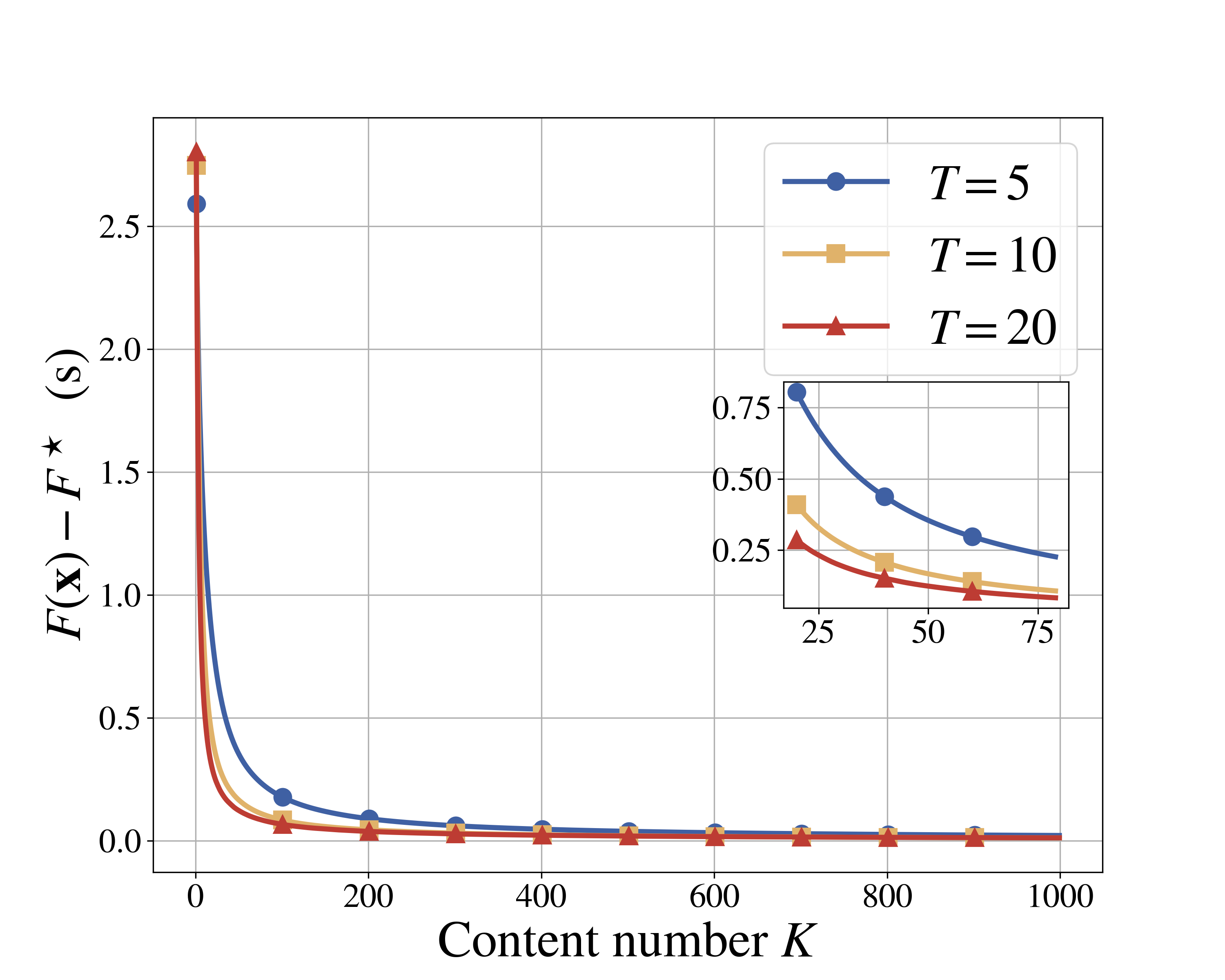}
        \caption{Log-normal ($N=50$)}
        \label{fig:exp:syn:conv:LN:N50}
    \end{subfigure}
    \hfill
    \begin{subfigure}[b]{0.24\textwidth}
        \centering
         \includegraphics[width=\linewidth]{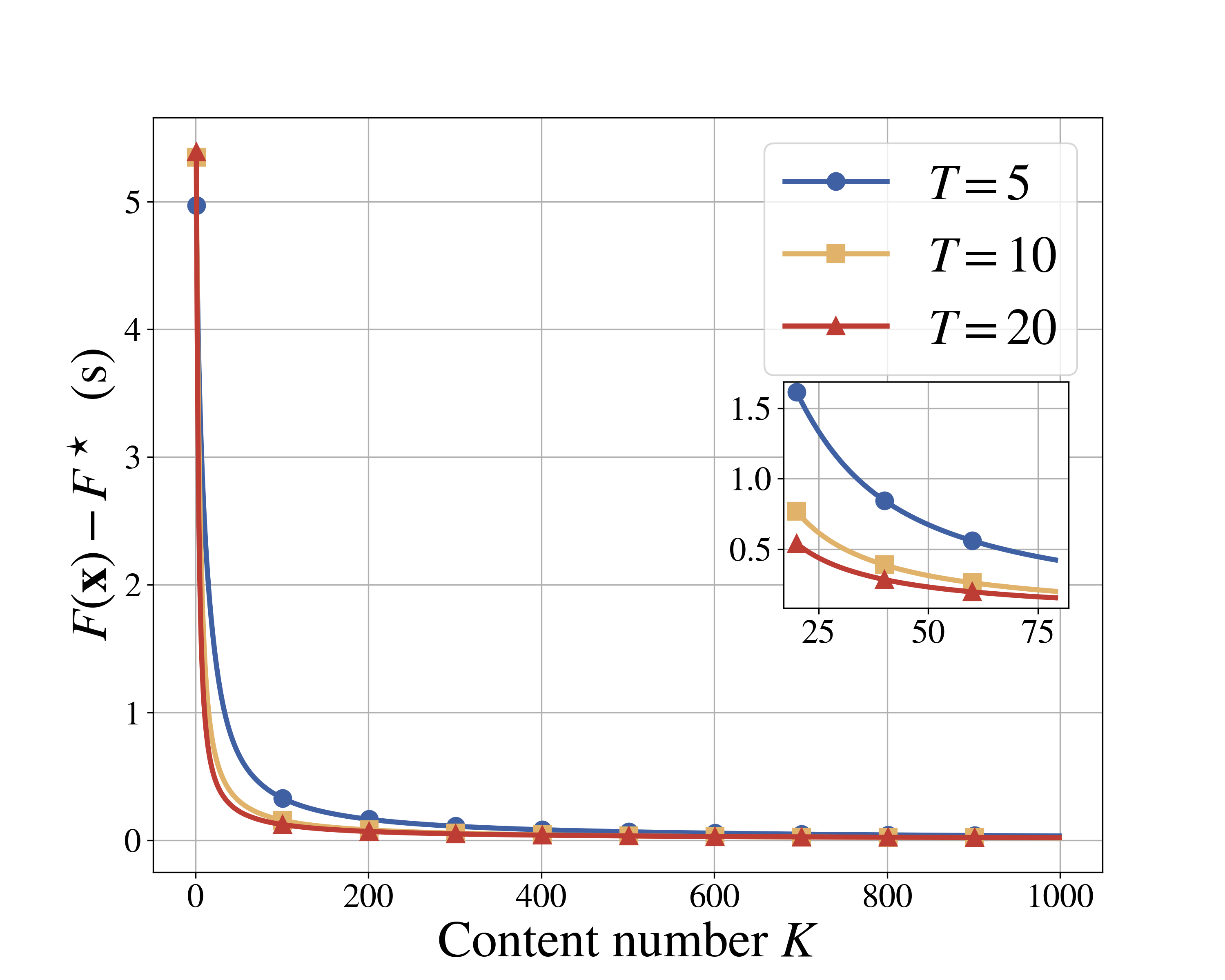}
        \caption{Log-normal ($N=100$)}
        \label{fig:exp:syn:conv:LN:N100}
    \end{subfigure}
    \hfill
    \caption{Synthetic simulation: Impact of parameter $T$.}
    \label{fig:exp:syn:convgence}
\end{figure*}

\begin{figure*}[!pt]  %
    \begin{subfigure}[b]{0.24\textwidth}
        \centering
        \includegraphics[width=\linewidth]{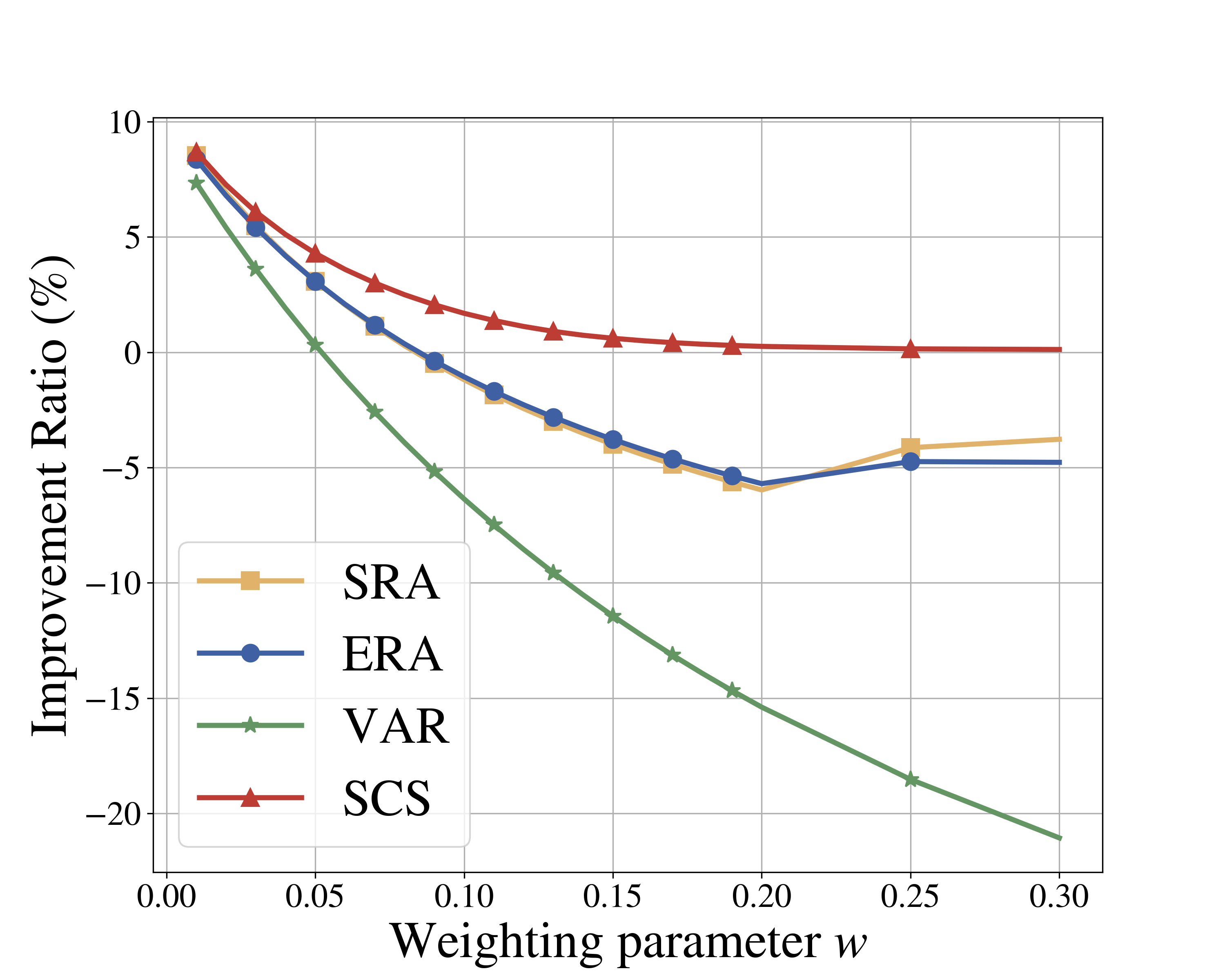}
        \caption{Truncated normal ($N=5$)}
        \label{fig:exp:syn:TG:N5}
    \end{subfigure}
    \hfill
    \begin{subfigure}[b]{0.24\textwidth}
        \centering
        \includegraphics[width=\linewidth]{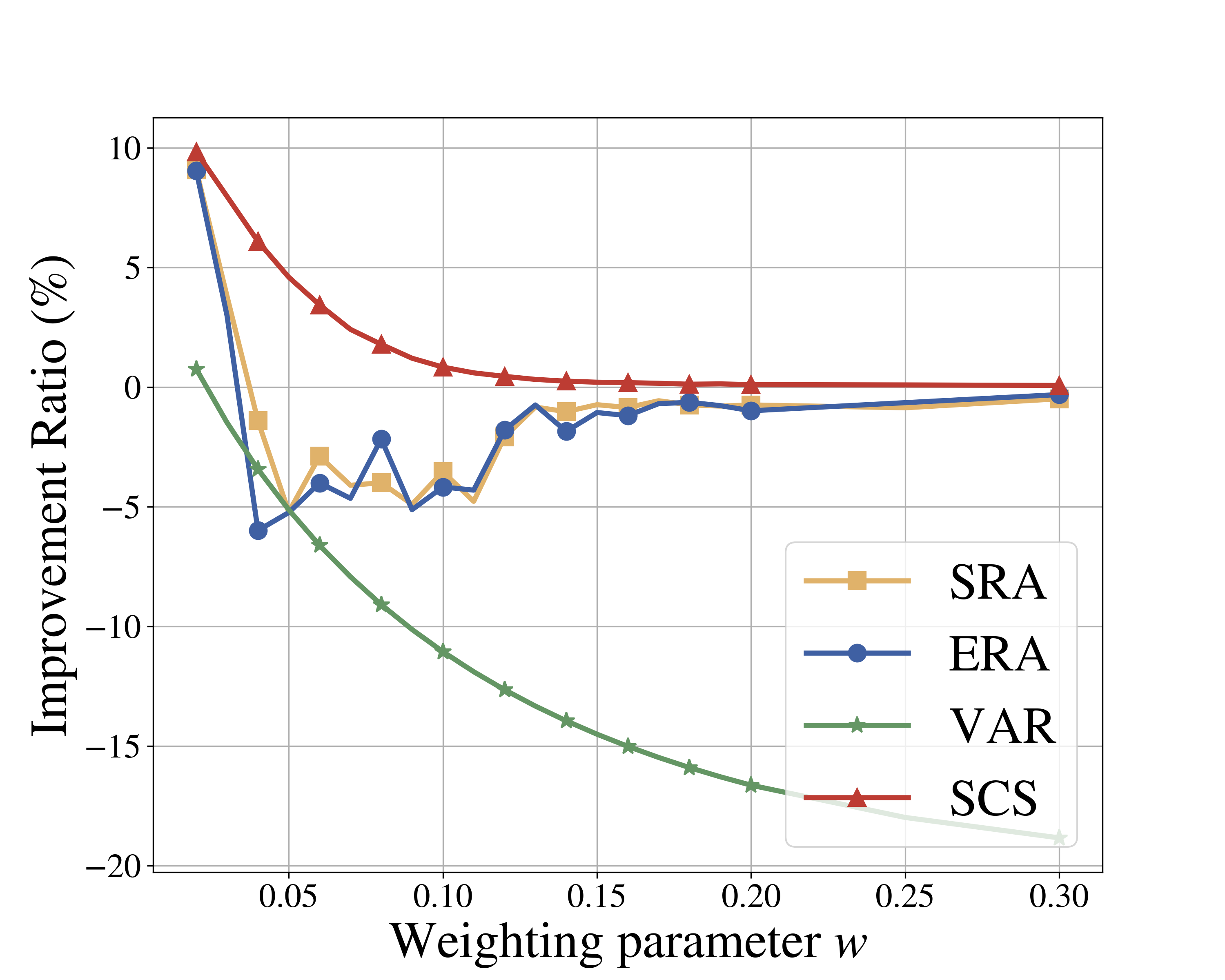}
        \caption{Truncated normal ($N=20$)}
        \label{fig:exp:syn:TG:N20}
    \end{subfigure}
    \hfill
    \begin{subfigure}[b]{0.24\textwidth}
        \centering
         \includegraphics[width=\linewidth]{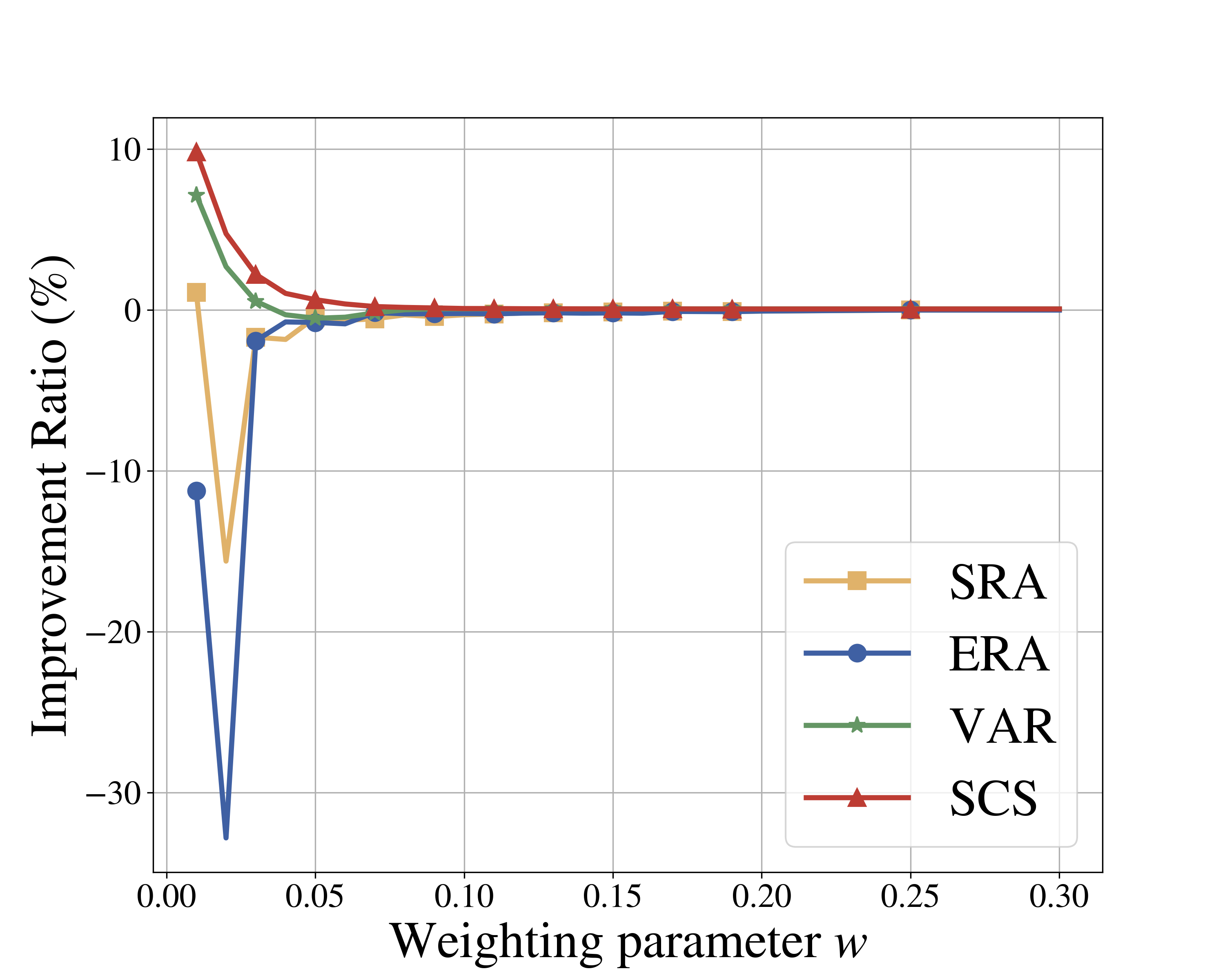}
        \caption{Log-normal ($N=50$)}
        \label{fig:exp:syn:LN:N50}
    \end{subfigure}
    \hfill
    \begin{subfigure}[b]{0.24\textwidth}
        \centering
         \includegraphics[width=\linewidth]{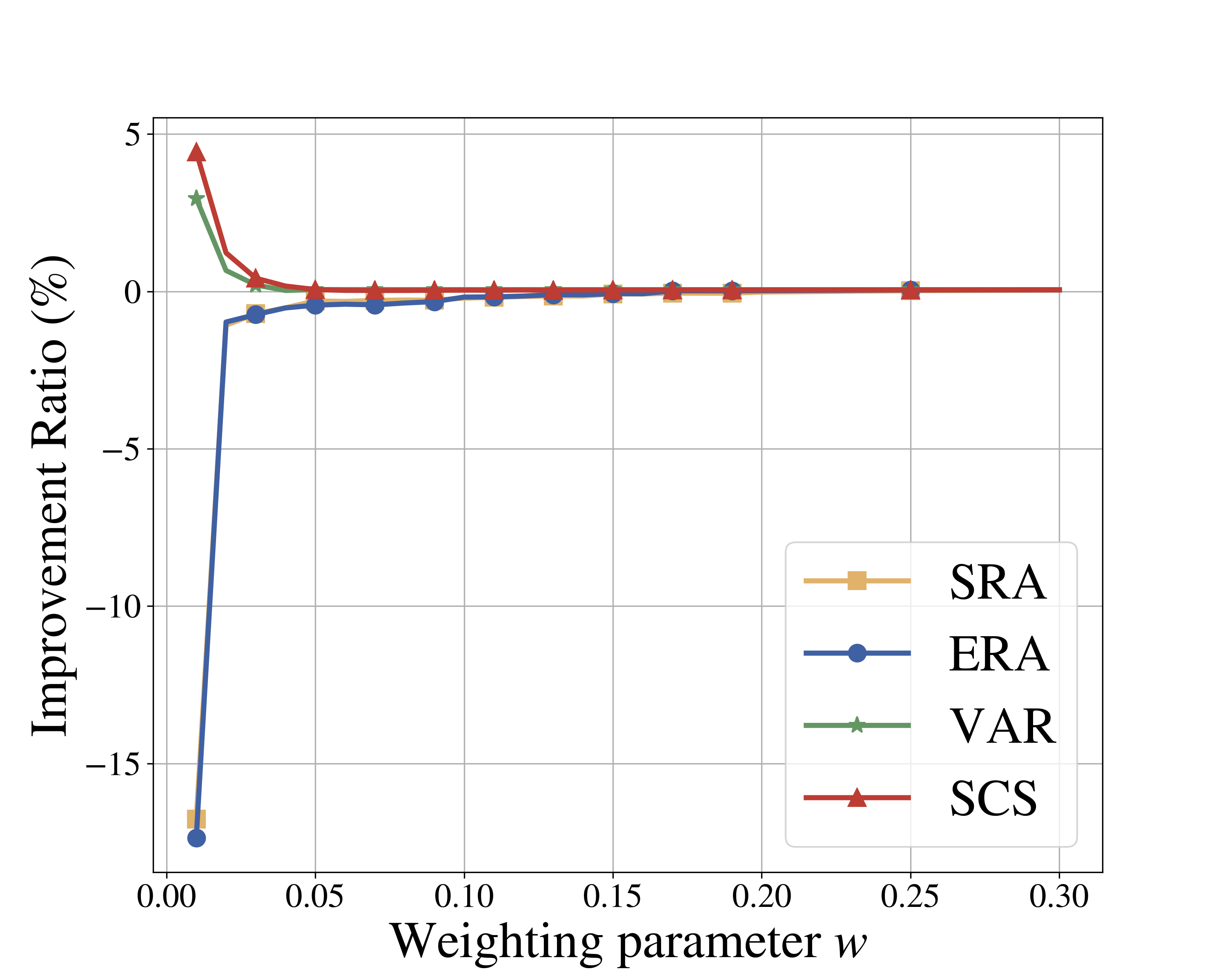}
        \caption{Log-normal ($N=100$)}
        \label{fig:exp:syn:LN:N100}
    \end{subfigure}     
    \caption{Synthetic simulation: improvement ratio. 
    }
    \label{fig:exp:syn}
\end{figure*}

\subsection{Synthetic Trace Simulation}
\label{sim:syn}

We evaluate content delivery latency across varying system scales, with the number of users $N$ ranging from $5$ to $100$. We use two common distributions: truncated normal and log-normal\footnote{Log-normal distributions have been shown to accurately model real-world network traffic, especially in large-scale data center systems~ \cite{gualandi2018call,zhou2021learning}.}. The detailed parameter settings are as follows:

\begin{itemize}
    \item \textit{Truncated Normal}: The mean is uniformly sampled from $10–200$\,ms, and the standard deviation from $1–100$\,ms. Latency values are truncated between 0 and $1000$\,ms to reflect practical bounds.
    
    \item \textit{Log-Normal}: The mean is uniformly sampled from $0.1–0.2$\,s, and the standard deviation from $1–100$\,ms. To ensure finite support, latency values are clipped at $10$\,s.
\end{itemize}

\subsubsection{Impact of parameter $T$ in SCS}
Our Proposition 2 indicates that theoretically, the precision of ADMM iterations, $\epsilon_k$, needs to decay at rate $\mathcal{O}(1/k^p)$ with $p>2$ to ensure the convergence to the optimal objective value. However, in practice, such a choice can be too conservative and unnecessarily increase the computation time. We would like to evaluate the impact of early termination in ADMM by executing a finite $T$  number of iterations on the convergence of our proposed algorithm. Here, the optimal objective value $F^\star$ is obtained using an offline Monte Carlo simulation (see \cite{kim2015guide}). As shown in Fig.~\ref{fig:exp:syn:convgence}, the proposed algorithm consistently converges to the optimal objective value across both latency distributions and a wide range of user scales ($N = 5 $ to $100$), demonstrating strong robustness and scalability. The designed SCS algorithm remains effective with a finite number of ADMM iterations. When \( T = 10 \) or \( 20 \), the algorithm converges with negligible optimality gaps. Even with \( T = 5 \), it achieves near-optimal performance in most cases, with only slight degradation under large user scales (see Fig. ~\ref{fig:exp:syn:convgence}(c) and (d)), showcasing its practicality under constrained computational budgets. Moreover, larger values of $T$ (i.e., a more accurate solution for each content delivery) lead to faster convergence across all settings, indicating a favorable trade-off between computational cost and performance in practice.

\subsubsection{Improvement ratio}
Next, we evaluate the performance of the proposed SCS algorithm with $T=10$ compared with the baseline methods SRA, ERA, and VAR, as illustrated in Fig. \ref{fig:exp:syn}. From Fig. \ref{fig:exp:syn:TG:N20}, we can observe that SCS outperforms all baseline algorithms when the latency follows the truncated normal distribution. Specifically, SCS achieves up to more than $80\%$ improvement ratio when the weighting parameter $w$ is small while SRA and ERA have relatively small improvement. Notably, VAR can even degrade synchronization performance, highlighting that merely minimizing latency variance across users cannot guarantee effective synchronization. For the Log-normal latency distribution, as shown in Fig. \ref{fig:exp:syn:LN:N50} and \ref{fig:exp:syn:LN:N100}, SCS again demonstrates superior performance, particularly when $w$ is small. It significantly outperforms both SRA and ERA, and maintains an advantage over VAR.

 \wang{To further illustrate the role of the weighting parameter $w$, we report the average max-min delay gap and the average violation probability under different values of $w$, as shown in Fig.~\ref{fig:exp:trade-off}. We consider $N=5$ users with log-normal latency distributions and define a violation event as $x_k(n)+d_k(n)>250$ ms. A smaller $w$ prioritizes synchronization and yields a smaller average max-min delay gap, whereas a larger $w$ reduces the average violation probability at the cost of a larger delay discrepancy. These results confirm that $w$ provides a tunable tradeoff between synchronization performance and delay-constraint satisfaction.}
 \sndrev{
 Accordingly, synchronization-critical applications, such as collaborative MR, may favor a smaller $w$, whereas applications prioritizing compliance with latency requirements, such as multi-party video conferencing, may favor a larger $w$.
 In practice, $w$ can be selected using representative validation traces to meet application-specific requirements. For instance, under the setting in Fig.~\ref{fig:exp:trade-off}, a target average violation probability of around 3\% at the $250$ ms threshold suggests $w \approx 0.1$. More generally, the operator may choose the smallest feasible $w$ under a latency constraint, or the largest feasible $w$ under a synchronization-gap requirement to reduce latency.}
\begin{figure}
    \centering
    \includegraphics[width=0.7\linewidth]{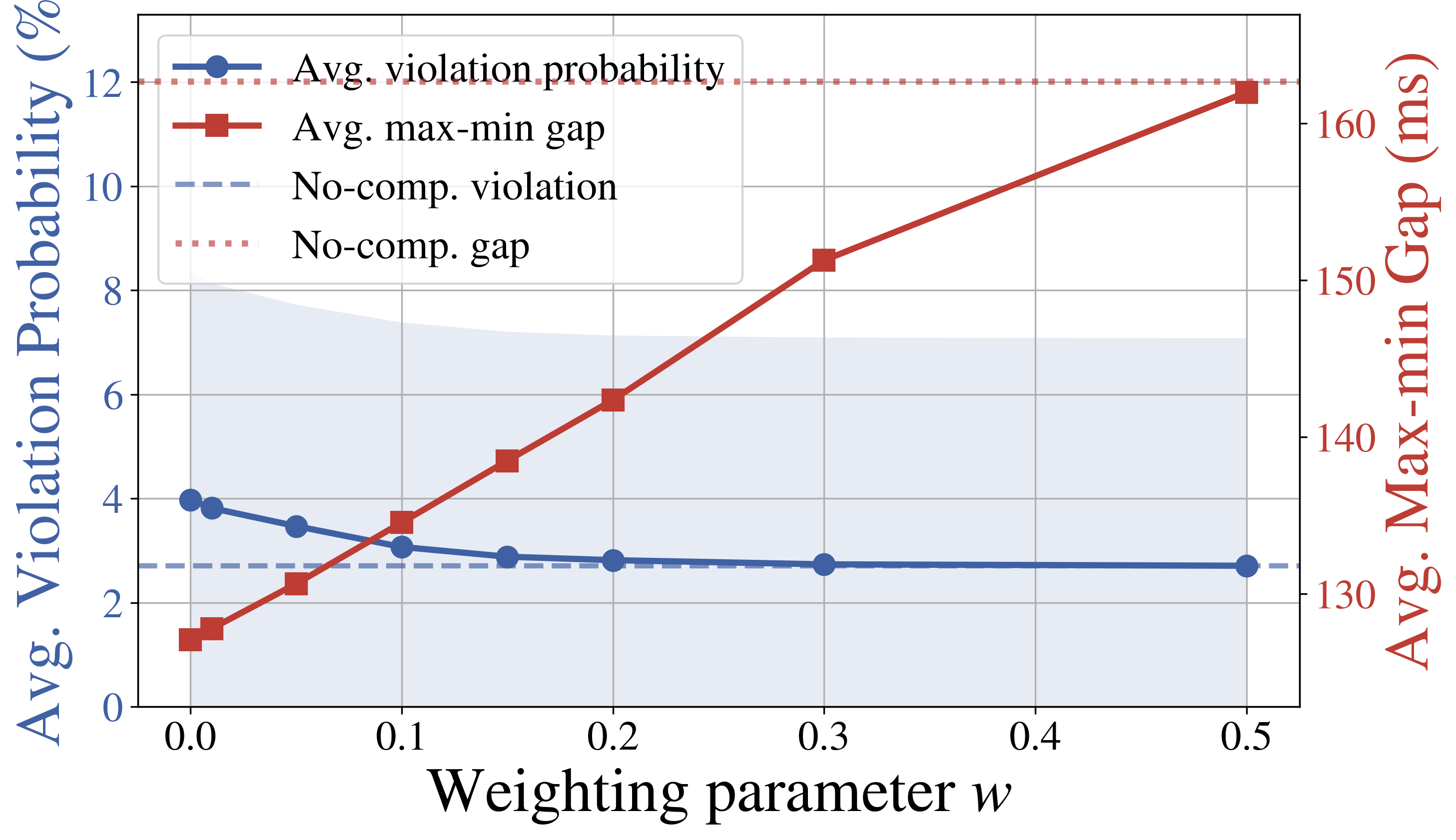}
    \caption{\wang{Synchronization–violation tradeoff. The shaded region shows the average user violation probability $\pm$ one standard deviation. Horizontal lines denote no-compensation baselines.}}
    \label{fig:exp:trade-off}
\end{figure}

\subsubsection{Convergence rate comparison}
We compare the proposed SCS algorithm, based on the SPP framework, with the SSGD method. Our experiments consider $N = 5$ users experiencing latencies modeled as latency-truncated normal distributions, under varying mean latency $\boldsymbol{\mu}$ and standard deviation $\boldsymbol{\sigma}$ patterns, detailed in Table~\ref{table:converg:param}:

\begin{table}[H]
\centering
\caption{Latency parameters configurations.}
\label{table:converg:param}
\begin{tabular}{lcc}
\toprule
Pattern & Mean vector $\boldsymbol{\mu}$\,[ms] & Std.~dev.\ vector $\boldsymbol{\sigma}$\,[ms] \\
\midrule
Low $\boldsymbol{\mu}$, low $\boldsymbol{\sigma}$  & $[2,\;4,\;6,\;8,\;10]$        & $[0.5,\;1,\;1.5,\;2,\;2.5]$  \\[2pt]
High $\boldsymbol{\mu}$, low $\boldsymbol{\sigma}$  & $[10,\;20,\;30,\;40,\;50]$    & $[0.5,\;1,\;1.5,\;2,\;2.5]$  \\[2pt]
High $\boldsymbol{\mu}$, high $\boldsymbol{\sigma}$  & $[10,\;20,\;30,\;40,\;50]$    & $[5,\;10,\;15,\;20,\;25]$    \\
\bottomrule
\end{tabular}
\end{table}

\begin{figure}
     \begin{subfigure}{0.24\textwidth}
        \centering
        \includegraphics[width=\linewidth]{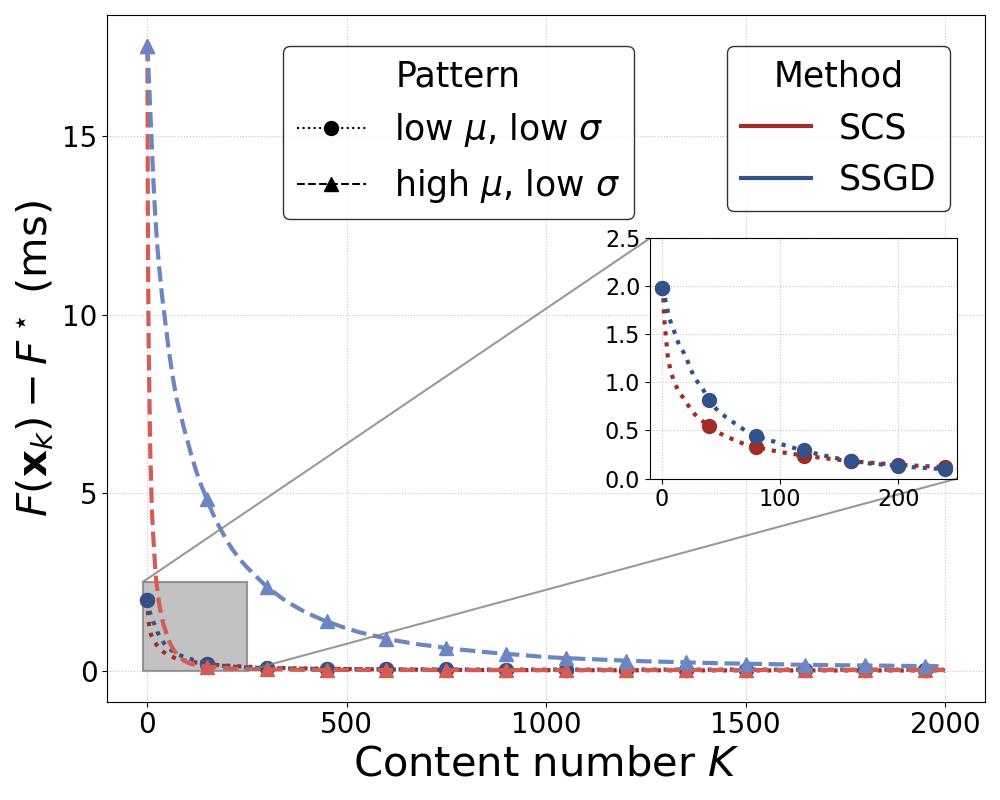}
        \caption{Effect of different mean latency on convergence}
        \label{fig:exp:subgradient:mean}
    \end{subfigure}
    \hfill
    \begin{subfigure}{0.24\textwidth}
        \centering
        \includegraphics[width=\linewidth]{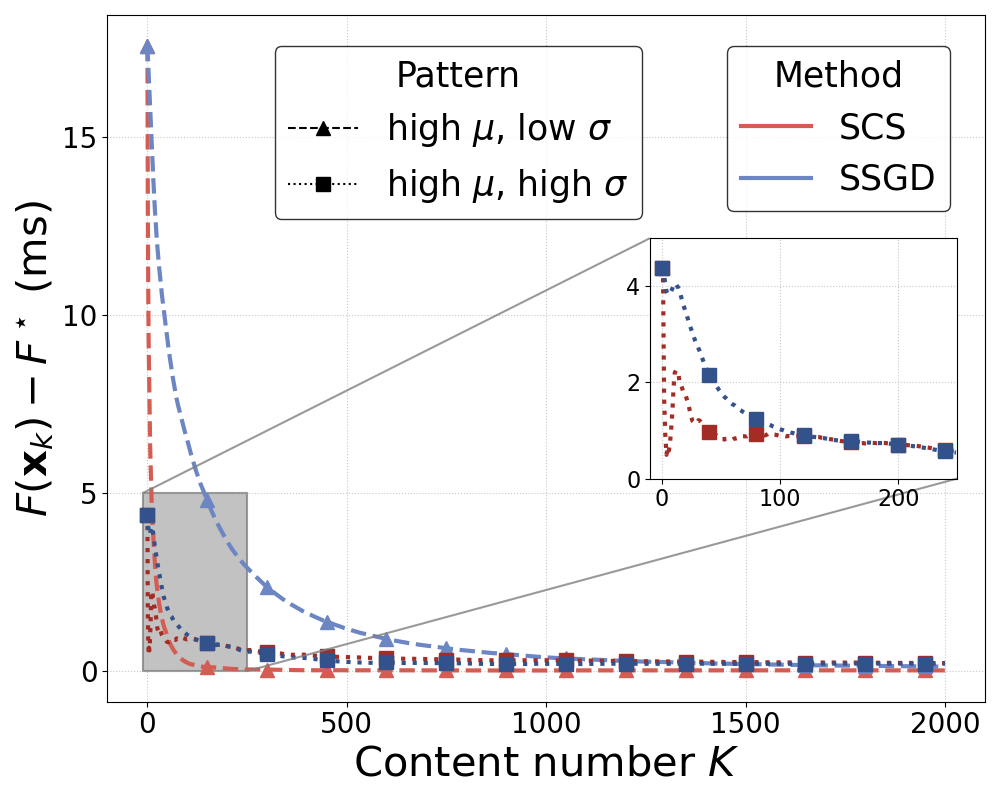}
        \caption{Effect of different latency variance on convergence}
        \label{fig:exp:subgradient:variance}
    \end{subfigure}
    \caption{Synthetic simulation: convergence of SCS vs. SSGD under varying latency distribution patterns.
    }
    \label{fig:exp:ssgd}
\end{figure}

We first evaluate the impact of the mean latency vector $\boldsymbol{\mu}$ by fixing the standard deviation $\boldsymbol{\sigma}$ (Fig.~\ref{fig:exp:subgradient:mean}). As $\boldsymbol{\mu}$ increases, the latency disparities among the users become more significant. Although both SCS and SSGD ultimately converge to the optimal value, SCS consistently requires fewer content deliveries to achieve near-optimality. Notably, when the mean latency differs from $40$ ms, SCS reaches near-optimality performance in about $250$ deliveries, whereas SSGD necessitates approximately $800$, demonstrating SCS’s superior efficiency under high mean latency. Next, we compare performance under two variance conditions: high $\boldsymbol{\mu}$ with low $\boldsymbol{\sigma}$ versus high $\boldsymbol{\mu}$ with high $\sigma$  (Fig.~\ref{fig:exp:subgradient:variance}). As $\boldsymbol{\sigma}$ increases, both algorithms experience slower convergence due to the increased variance in latency samples. In the setting with low $\boldsymbol{\sigma}$, indicating a more stable network with low jitter, SCS converges in fewer than 150 deliveries, whereas SSGD requires more than $500$ round contents. Overall, SCS consistently exhibits superior convergence speed to SSGD across a diverse range of latency patterns.

\wang{
\subsubsection{Impact of parameter $\rho$}

We evaluate the sensitivity of SCS to the ADMM penalty parameter $\rho$ across two representative settings: truncated normal latencies with $N=5$ users and log-normal latencies with $N=50$ users. As shown in Fig.~\ref{fig:exp:parameter}, while slight differences in convergence speed are observed, all tested values of $\rho$ converge to near-optimal performance in both settings, demonstrating that the algorithm is robust to the choice of $\rho$. These results suggest that, in practice, $\rho$ can be selected within a reasonable range without significantly affecting synchronization performance.}

\begin{figure}
     \begin{subfigure}{0.24\textwidth}
        \centering
        \includegraphics[width=\linewidth]{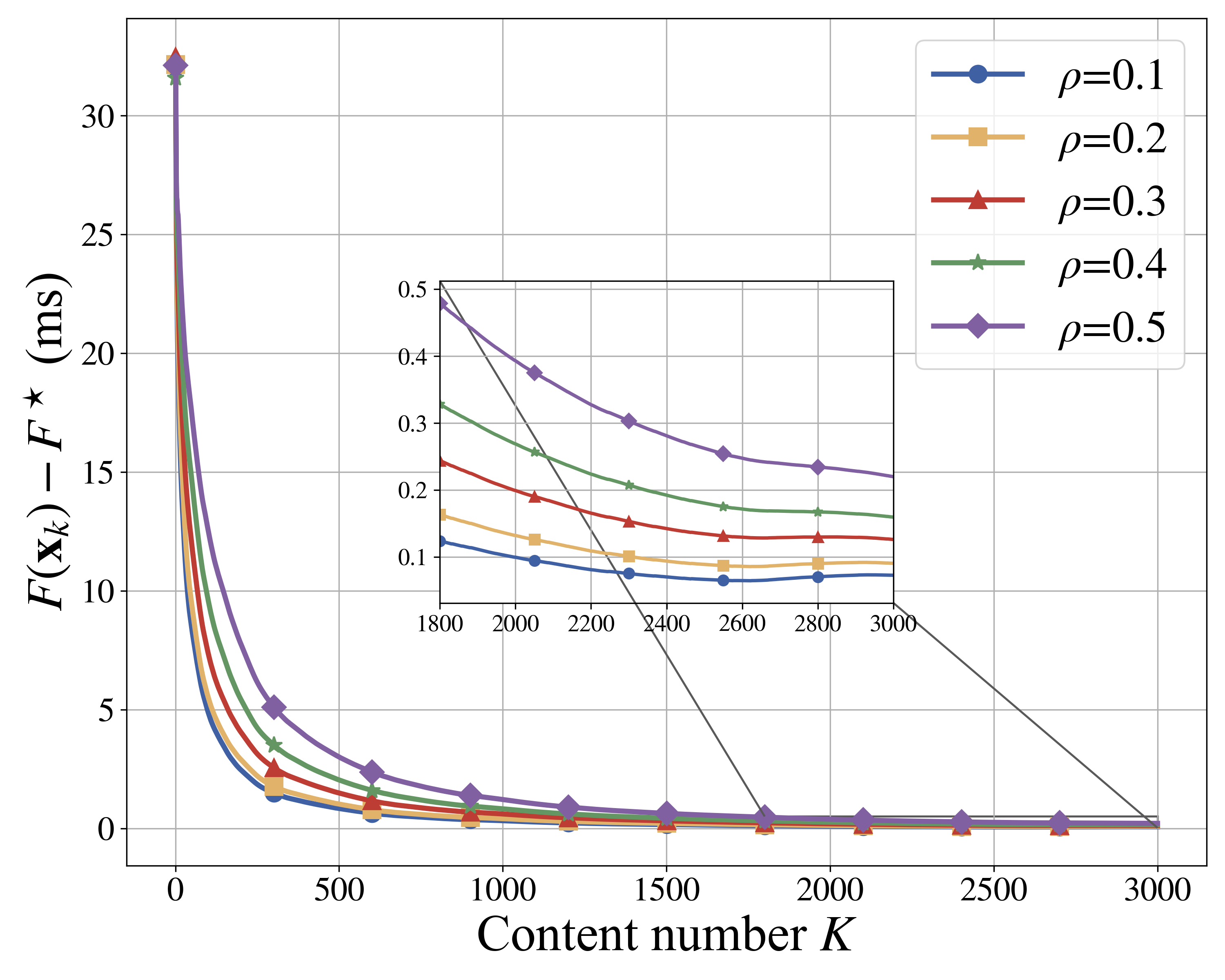}
        \caption{Truncated normal ($N$=5)}
        \label{fig:exp:parameter:N5}
    \end{subfigure}
    \hfill
    \begin{subfigure}{0.24\textwidth}
        \centering
        \includegraphics[width=\linewidth]{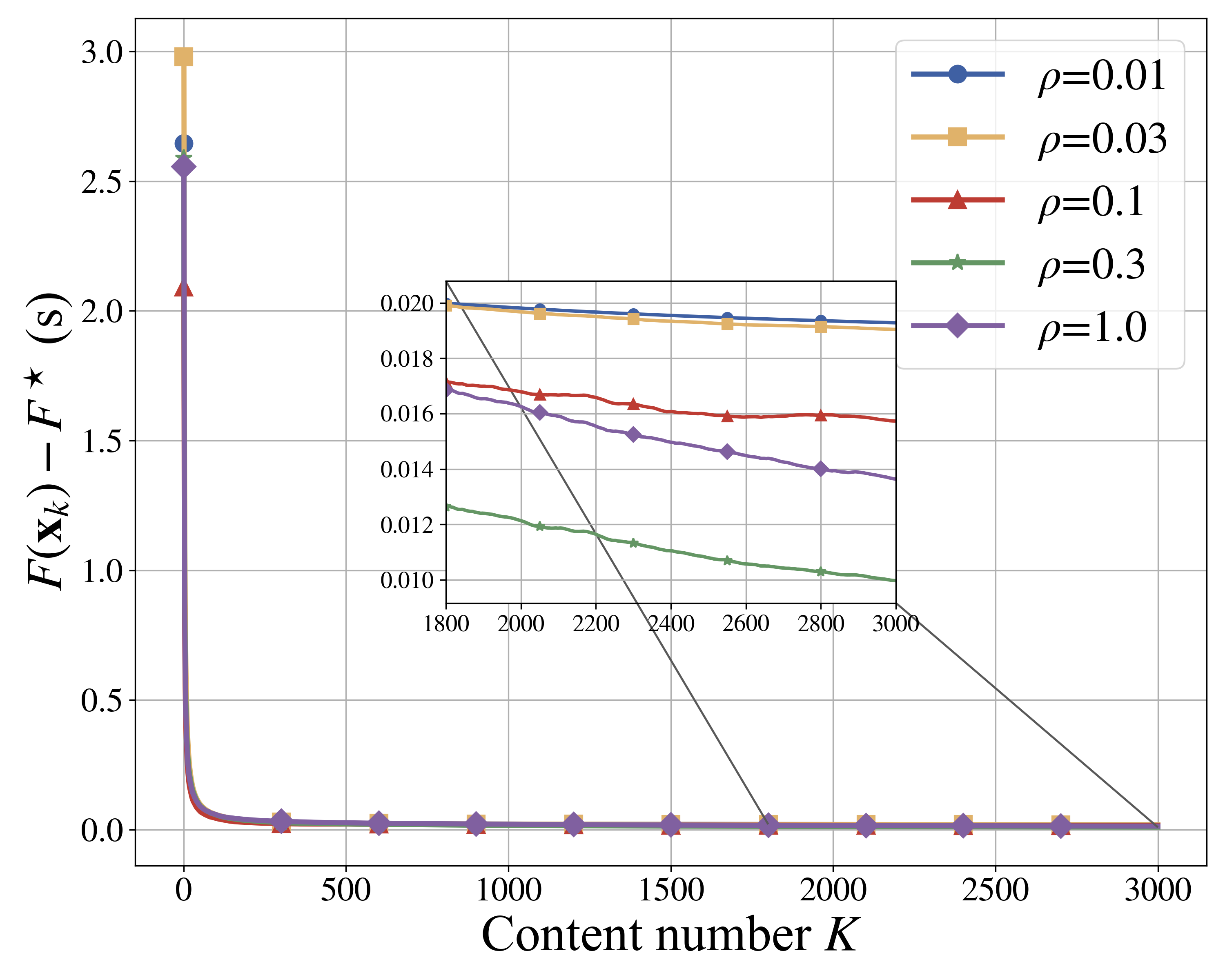}
        \caption{Log-normal ($N$=50)}
        \label{fig:exp:parameter:N50}
    \end{subfigure}
    \caption{\wang{Impact of parameter $\rho$ on the convergence of SCS.}}
    \label{fig:exp:parameter}
\end{figure}

\subsection{Real-world Trace Simulation}
In this subsection, we demonstrate the effectiveness of the proposed algorithm SCS using two real-world latency datasets.  First, we use the network traces from \cite{zhu2016network}, which collect the round trip times (RTTs) between $99$ nodes in the Seattle network over $688$ distinct time periods. We repeated the use of one trace to simulate the same network condition to let the nodes be 100.
\begin{figure*}[htbp]
    \begin{subfigure}[b]{0.24\textwidth}
        \centering
        \includegraphics[width=\linewidth]{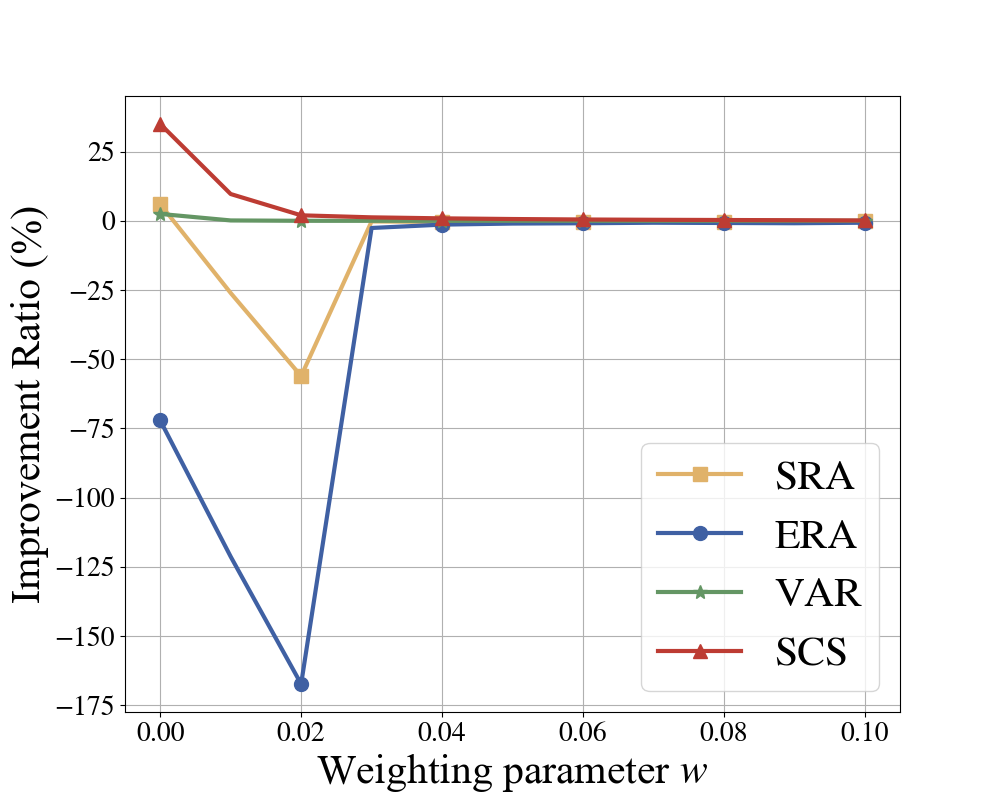}
        \caption{\centering Latency traces in Seattle \newline($N = 50$)}
        \label{fig:exp:real:Network:N50}
    \end{subfigure}
    \hfill
    \begin{subfigure}[b]{0.24\textwidth}
        \centering
        \includegraphics[width=\linewidth]{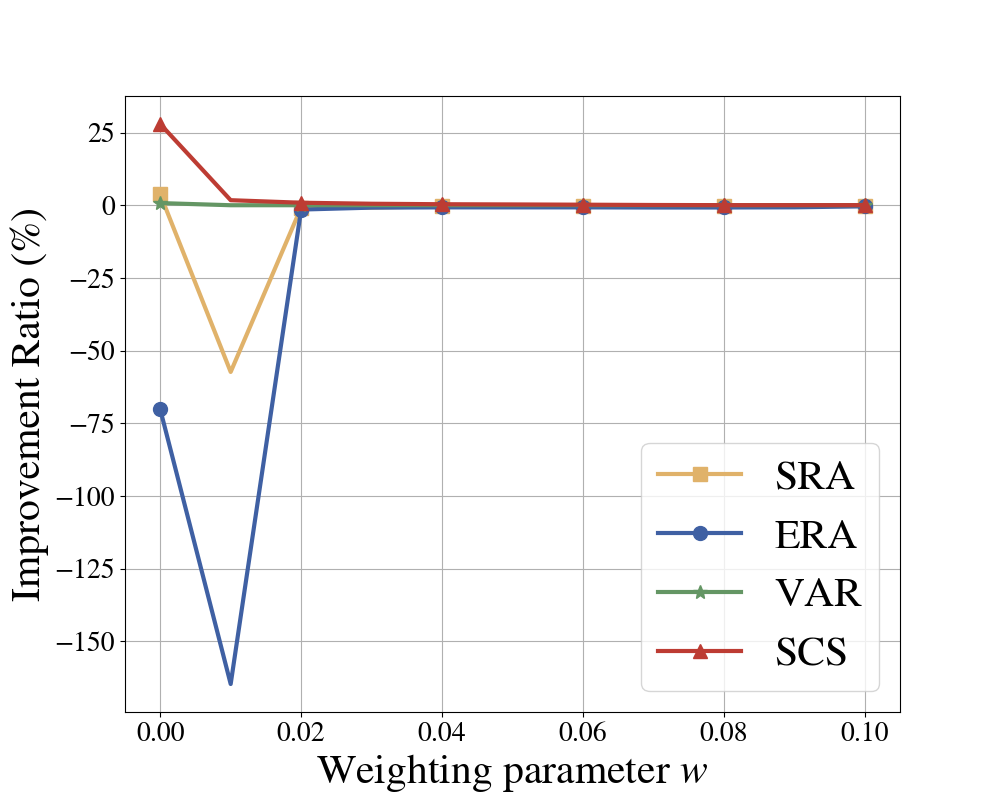}
        \caption{\centering Latency traces in Seattle \newline($N = 100$)}
        \label{fig:exp:real:Network:N100}
    \end{subfigure}
    \hfill
    \begin{subfigure}[b]{0.24\textwidth}
        \centering
        \includegraphics[width=\linewidth]{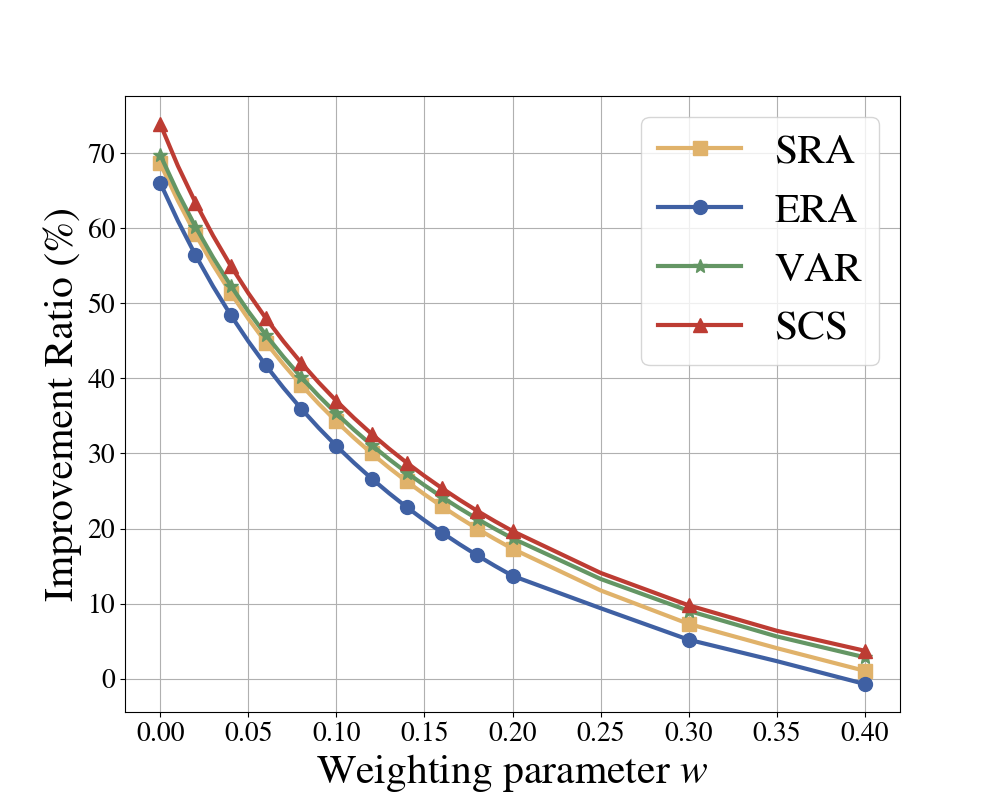}
        \caption{\centering MR-based firefighter training \newline($N = 5$)}
        \label{fig:exp:real:MR:N5}
    \end{subfigure}
    \hfill
    \begin{subfigure}[b]{0.24\textwidth}
        \centering
        \includegraphics[width=\linewidth]{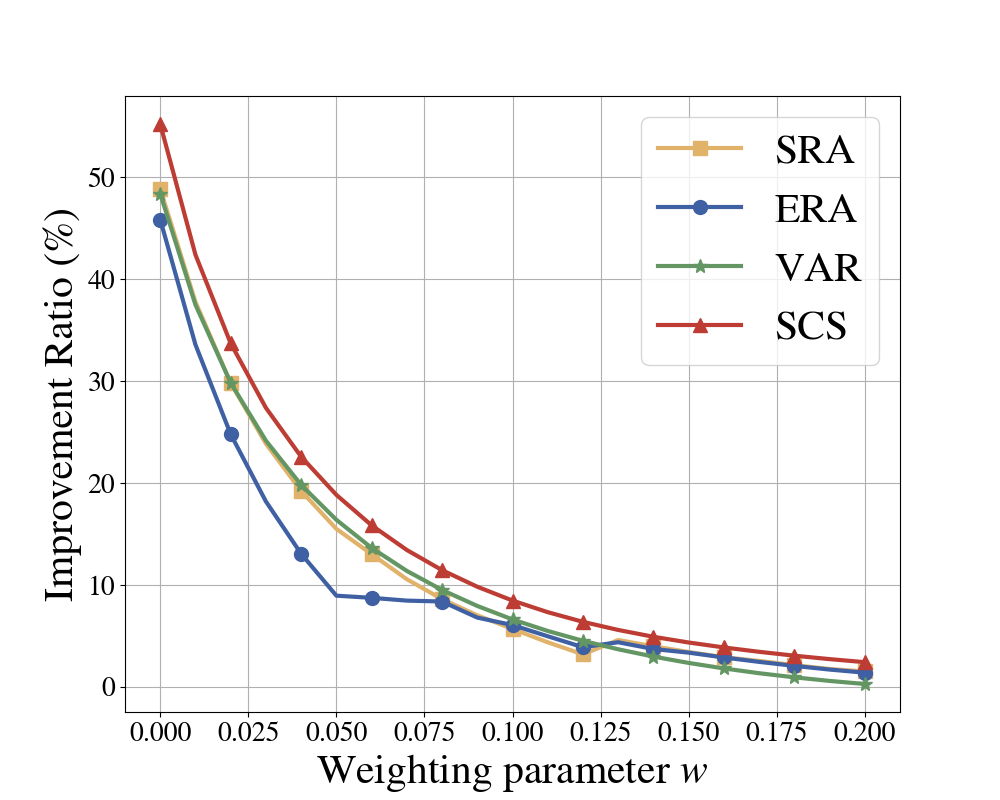}
        \caption{\centering MR-based firefighter training \newline ($N = 20$)}
        \label{fig:exp:real:MR:N20}
    \end{subfigure}
     \hfill
    \caption{Real-world trace simulation: improvement ratio.}
    \label{fig:exp:realtrace}
\end{figure*}

\subsubsection{Improvement ratio} 
 As shown in Fig. \ref{fig:exp:real:Network:N50} and \ref{fig:exp:real:Network:N100}, we observe that the proposed SCS consistently exhibits a larger improvement ratio than baseline algorithms. Specifically, when $w = 0.01$ and $N=50$ (see Fig. \ref{fig:exp:real:Network:N50}), SCS achieves an improvement ratio of around 10\%, while SRA and ERA adversely impact performance, resulting in undesirable experiences for users. VAR, in contrast, shows no meaningful improvement in large-scale network synchronization. Additionally, we examine the performance of SCS on the dataset from an MR-based firefighter training platform, which records real transmission latency from the server to multiple participants \cite{dong2023collaborative}. As shown in Fig. \ref{fig:exp:real:MR:N5} and \ref{fig:exp:real:MR:N20}, the improvement ratio of SCS is higher than SRA, ERA, and VAR across the weighting parameter $w$ from $0$ to $0.2$, This demonstrates the effectiveness of the proposed algorithm, especially when synchronization performance is prioritized. These results highlight the robustness and effectiveness of SCS, particularly in scenarios where precise synchronization is critical.

\subsubsection{Computational time}
Moreover, we evaluate the computational time of the proposed algorithm SCS. Fig. \ref{fig:exp:time:network} and Fig. \ref{fig:exp:time:MR}  illustrate the computational time per content delivery compared against the toolbox CVXPY. We can see that the computational time of SCS is significantly less than that of CVXPY. Specifically, the computation time for CVXPY exceeds $5$ ms, while our algorithm requires usually less than $2$ ms even when there are 50 users. This illustrates that the proposed SCS is time-efficient, making it promising for real-time content delivery in practical systems.

\begin{figure}   %
     \begin{subfigure}{0.24\textwidth}
        \centering
        \includegraphics[width=\linewidth]{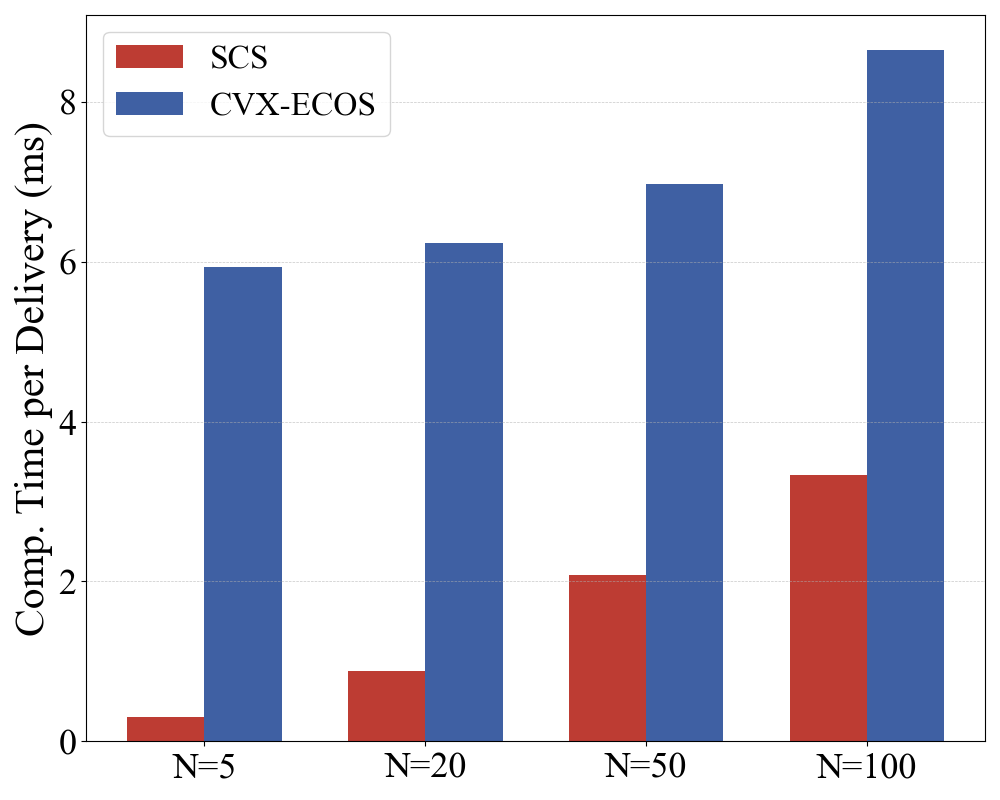}
        \caption{Latency traces in Seattle}
        \label{fig:exp:time:network}
    \end{subfigure}
    \hfill
    \begin{subfigure}{0.24\textwidth}
        \centering
        \includegraphics[width=\linewidth]{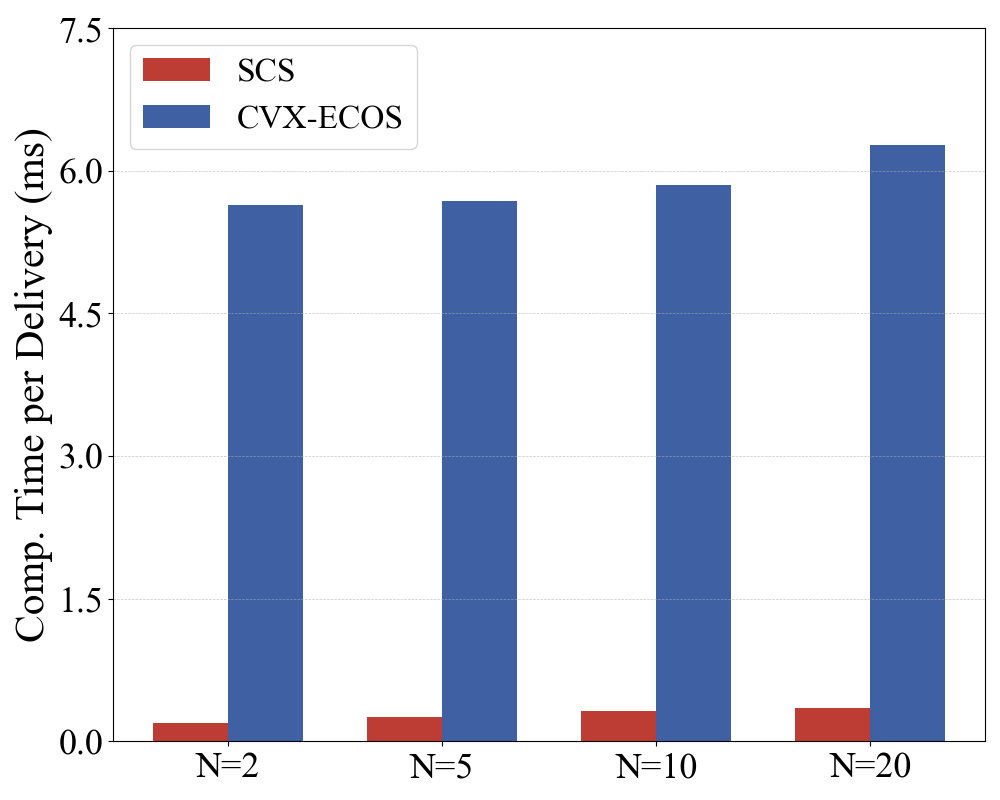}
        \caption{MR-based firefighter training}
        \label{fig:exp:time:MR}
    \end{subfigure}
    \caption{Real-world trace simulation: computational time.}
    \label{fig:exp:runtime}
\end{figure}

\subsubsection{Adaptability to different cost functions} 
We study the practical impact of varying cost functions in the synchronization tasks. We employ five representative network traces and set the weighting parameter $w = 0.1$ to ensure a consistent experimental environment. The proposed SCS algorithm is compared against baseline algorithms across three common cost functions $c_n(x)$:  linear $(x)$, quadratic $(x^2)$, exponential ($e^{0.1x}-1$). As depicted in Fig.~\ref{fig:VaryingCostFunctions}, SCS consistently achieves a higher improvement ratio across all cost functions. Particularly, under the exponential function, SCS achieves over $70\%$ improvement, significantly outperforming SRA, which attains approximately $20\%$ improvement. These results highlight the strong adaptability of SCS across diverse latency cost models in practical synchronization scenarios.

\begin{figure}[h]
    \centering
    \includegraphics[width=0.7\linewidth]{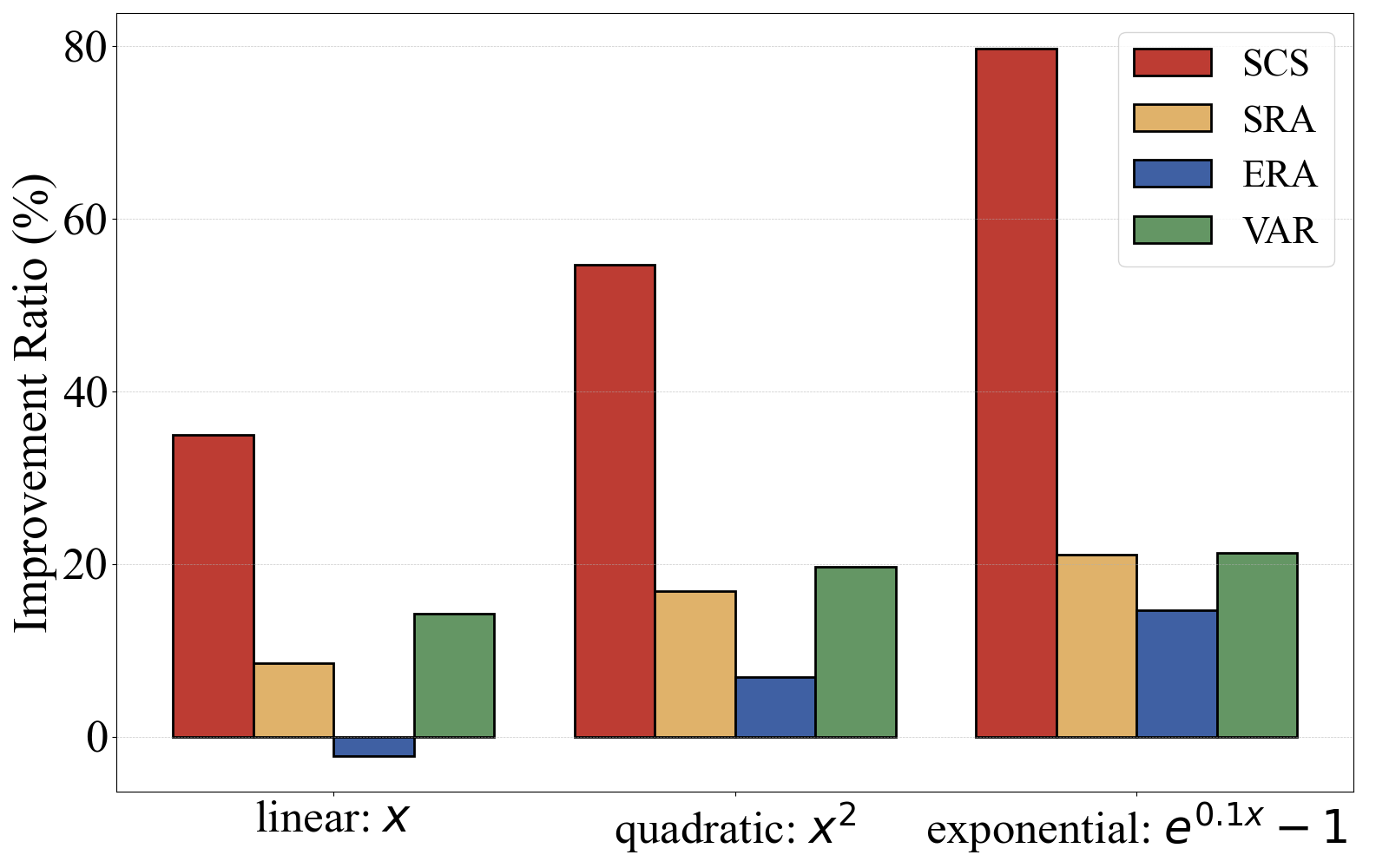}
    \caption{Real-world trace simulation: adaptability to different cost functions. }
    \label{fig:VaryingCostFunctions}
\end{figure}

\subsection{Dropping Case Simulation}
\label{sim:dropping_case}
To evaluate the performance of the proposed D-SCS algorithm under dropping case, we introduce an i.i.d.\ packet dropping rate with a fixed probability $q(n) = q$ for all users. The dropping probability $q$ takes values from the set $ \{0,\,0.05,\,0.1,\,0.2\}$, simulating increasing challenging network environments. We evaluate the algorithm in two representative scenarios: $N=5$ users with truncated-normal latencies and $N=20$ users with log-normal latencies. Fig.~\ref{fig:exp:dropping:conv} shows the objective value versus the number of contents~$K$ in which dashed lines indicate the optimal function value corresponding to each dropping rate. In both scenarios, D-SCS algorithm demonstrates robust convergence behavior, maintaining near-optimal performance even under severe packet loss conditions (e.g., $q=0.2$). These results underscore the resilience and effectiveness of the proposed algorithm in maintaining synchronization in realistic network conditions with packet drops.

\begin{figure}[t!]
     \begin{subfigure}{0.24\textwidth}
        \centering
        \includegraphics[width=\linewidth]{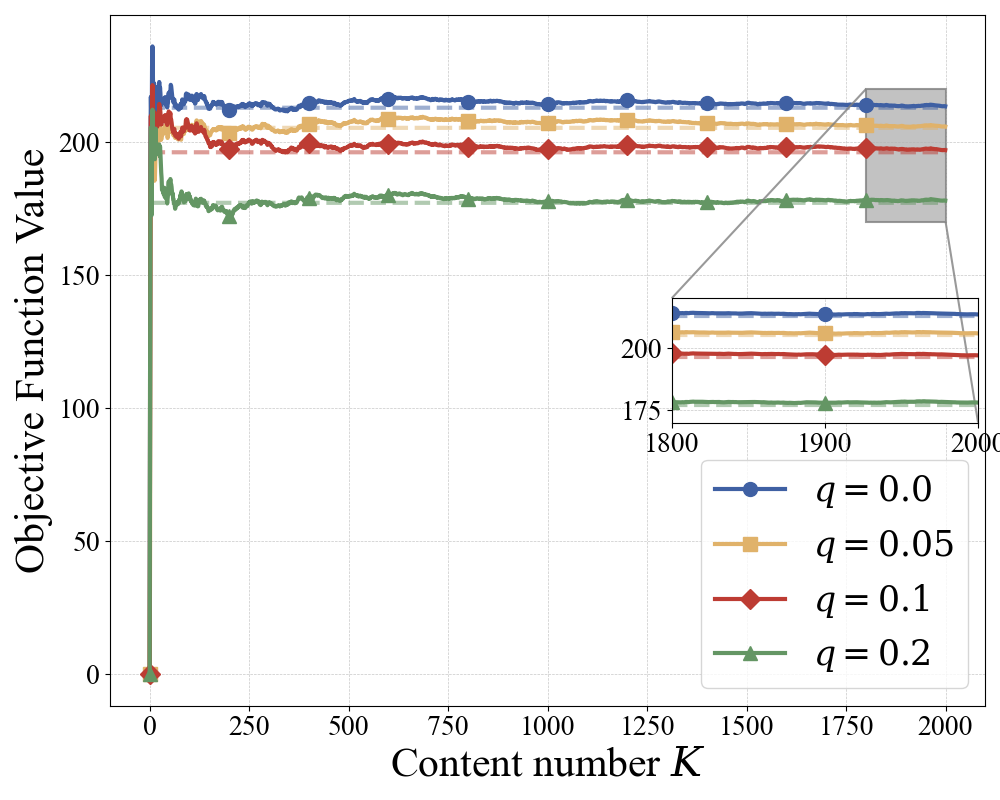}
        \caption{Truncated normal ($N$=5)}
        \label{fig:exp:dropping:conv:1}
    \end{subfigure}
    \hfill
    \begin{subfigure}{0.24\textwidth}
        \centering
        \includegraphics[width=\linewidth]{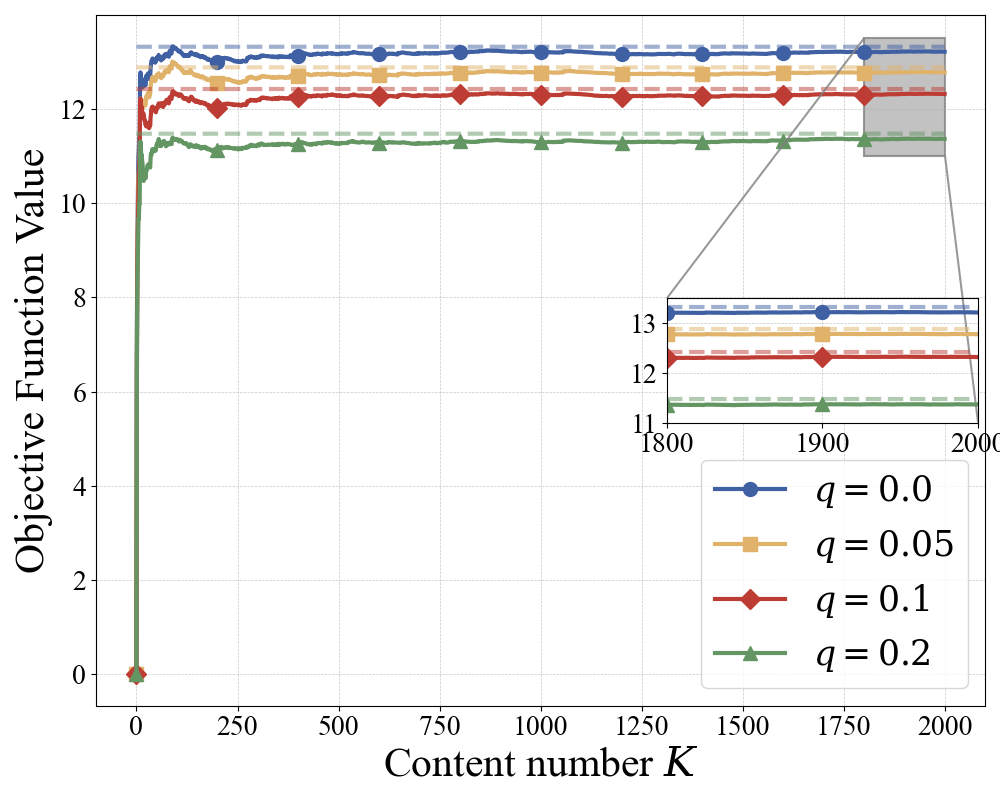}
        \caption{Log-normal ($N$=20)}
        \label{fig:exp:dropping:conv:2}
    \end{subfigure}
    \caption{Dropping case simulation: convergence under varying dropping rates. Dashed lines denote the optimal function value.}
    \label{fig:exp:dropping:conv}
\end{figure}

\section{Conclusion}

In this paper, we considered the problem of real-time synchronized scheduling for collaborative content delivery from a server to a group of users. We introduced latency compensation to control the server’s delivery time for each user. Crucially, our framework is clock-free, relies only on simple acknowledgment feedback without requiring a synchronized clock or client-side timestamps. Our goal is to minimize the discrepancy between maximum and minimum latencies while keeping the overall latency costs experienced by all users low. Due to the randomness of latency, we formulated this as a stochastic optimization problem involving a non-smooth component from the synchronization. To tackle this challenge, we utilized SPP, which only requires historical latency samples. We explored the inherent problem structure and leveraged the ADMM to address the challenges posed by constraints and the non-smooth component, resulting in only the computational complexity of $\mathcal{O}(N\log N)$ per iteration. We further demonstrate that our proposed algorithm converges to the optimal objective value at the rate of $\mathcal{O}(1/\sqrt{K})$. Additionally, we extended our design to handle packet drops during delivery, confirming its robustness in degraded network conditions. Finally, our simulations evaluated the effectiveness of the proposed algorithm against baselines. \wang{An interesting future direction is to extend the framework to dynamic user scenarios where users may join or leave, which would require modeling time-varying active user set and analyzing convergence under changing problem dimensions.}

\appendices

\section{Proper and Convex Objective Function}
\label{appendix:convex}
\begin{claim}
\label{claim:convex}
   $f(\, \cdot \,;\mb{d})$ and $F$ are proper and convex functions.
\end{claim}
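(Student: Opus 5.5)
The plan is to decompose $f(\,\cdot\,;\mb{d})$ into three pieces and check convexity and properness of each. We write
\begin{align*}
f(\mb{x};\mb{d}) = \underbrace{\max_{n}\,[x(n)+d(n)]}_{h_1(\mb{x})} + \underbrace{\Big(-\min_{n}\,[x(n)+d(n)]\Big)}_{h_2(\mb{x})} + \underbrace{w\sum_{n=1}^N c_n(x(n)+d(n))}_{h_3(\mb{x})}.
\end{align*}
Each map $\mb{x}\mapsto x(n)+d(n)$ is affine. Hence $h_1$ is a pointwise maximum of finitely many affine functions and is convex. Since $-\min_n a_n=\max_n(-a_n)$, the term $h_2$ is likewise a pointwise maximum of affine functions and is convex. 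For $h_3$, each $c_n$ is convex, so $c_n$ composed with an affine map is convex. Because $w\ge 0$, the nonnegative combination $h_3$ is convex. A sum of convex functions is convex, so $f(\,\cdot\,;\mb{d})$ is convex on $\mb{x}\succeq\mb{0}$, which is its natural domain since $c_n$ is defined on $\mathbb{R}_+$ and $d(n)\ge 0$.

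For properness of $f(\,\cdot\,;\mb{d})$, first note that $\max-\min\ge 0$ and $c_n\ge 0$, so $f\ge 0>-\infty$. For finiteness, at $\mb{x}=\mb{0}$ every argument lies in $[0,\tau_{\max}]$. The value $f(\mb{0};\mb{d})$ is then at most $\tau_{\max}+w\sum_n\max_{[0,\tau_{\max}]}c_n$, which is finite because each $c_n$ is continuous, being smooth.

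For $F$, convexity comes from taking expectations. For $\lambda\in[0,1]$, the convexity inequality for $f(\,\cdot\,;\mb{d})$ holds for every realization $\mb{d}$. Taking expectations, which is linear and monotone, gives $F(\lambda\mb{x}+(1-\lambda)\mb{y})\le\lambda F(\mb{x})+(1-\lambda)F(\mb{y})$. The expectation is well defined in $[0,\infty]$ because $f\ge 0$.

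Properness of $F$ follows the same route: $F\ge 0$ everywhere, and $F(\mb{0})\le \tau_{\max}+w\sum_n\max_{[0,\tau_{\max}]}c_n<\infty$ by the uniform bound obtained above. The only subtlety is this measurability and integrability point. It is handled by the boundedness assumption $d(n)\le\tau_{\max}$ together with the continuity of $f$ in $\mb{d}$, which makes $f(\mb{x};\mb{d})$ a bounded measurable function of $\mb{d}$ for each fixed $\mb{x}$. I do not expect a real obstacle; the main care is in stating the domain $\mb{x}\succeq\mb{0}$ and justifying that $F(\mb{0})$ is finite.
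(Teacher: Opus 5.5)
Your proposal is correct and takes essentially the same route as the paper. Convexity of the max, concavity of the min, and convexity of the cost term give convexity of $f(\cdot;\mb{d})$; the realization-wise convexity inequality passes through the expectation to $F$; and properness follows from $f\ge 0$ together with finiteness at $\mb{x}=\mb{0}$. Your uniform bound $f(\mb{0};\mb{d})\le \tau_{\max}+w\sum_n\max_{[0,\tau_{\max}]}c_n$ is a small but useful addition: it actually justifies $F(\mb{0})<\infty$, which the paper only asserts as a consequence of $f$ being proper.
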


\begin{proof}
    Given any realization of the $N$-dimensional random vector $\mathbf{d}$, the function $\max_{n \in [N]}(x_n + d_n)$ is convex w.r.t. $\mathbf{x}$, $\min_{n \in [N]}(x_n + d_n)$ is concave w.r.t. $\mathbf{x}$, and $w \sum_{n \in [N]} c_n(x_n + d_n)$ is convex w.r.t. $\mathbf{x}$. Therefore, the function $f(\mathbf{x}; \mathbf{d})$ is convex w.r.t. $\mathbf{x}$ for any fixed $\mathbf{d}$. Recall $F(\mathbf{x}) \triangleq \mathbb{E} [f(\mathbf{x}; \mathbf{d})]$. For any $\lambda \in [0, 1]$, we have
    \begin{align}
        F(\lambda \mathbf{x}_1 + (1-\lambda) \mathbf{x}_2) 
        =& \ \mathbb{E} \left[ f(\lambda \mathbf{x}_1 + (1-\lambda) \mathbf{x}_2; \mathbf{d}) \right] \nonumber \\
        \leq & \ \mathbb{E} \left[ \lambda f(\mathbf{x}_1; \mathbf{d}) + (1-\lambda) f(\mathbf{x}_2; \mathbf{d}) \right] \nonumber \\
        = & \ \lambda \mathbb{E} [f(\mathbf{x}_1; \mathbf{d})] + (1-\lambda) \mathbb{E}_{\mathbf{d}} [f(\mathbf{x}_2; \mathbf{d})] \nonumber \\
        = & \ \lambda F(\mathbf{x}_1) + (1-\lambda) F(\mathbf{x}_2), \nonumber
    \end{align}
    where the second step follows from the convexity of $f(\mb{x};\mb{d})$. Thus, $F$ is a convex function.

    Next, we show that the function $f(\cdot;\mb{d})$ is proper. 
    In addition, given that $\mb{x}$ and $\mb{d}$ are non-negative, it follows that $f(\mb{x};\mb{d}) \geq 0$ for all $\mb{x}\succeq\mb{0}$. Moreover, when $\mb{x} = \mb{0}$, $f(\mb{0};\mb{d}) < +\infty$ for the finite $\mb{d}$. Hence, the function $f(\cdot;\mb{d})$ is proper and consequently, $F$ is also proper. 
\end{proof}

\section{Proof of Lemma~\ref{lem:order-preserve}}\label{pf:lem-1}

For simplicity, we denote $y_n^{\star}$ as $y^\star(n)$ for any $n \in [N]$ in this proof. We want to show that $y_m^{\star}\geq y_n^{\star}$ for $m>n$. We will prove this by contradiction. Suppose $y_m^\star < y_n^\star$ when $m>n$. According to the Lemma condition, there are only two cases to consider: $x_m>x_n$ and $x_m=x_n$. 

First, if $x_m>x_n$, we will show that swapping $y_m^\star$ and $y_n^\star$ will reduce the objective value, contradicting the optimality of $\mb{y}$. Second, if $x_m=x_n$, the swap will generate 
two optimal solutions, contradicting the uniqueness of $\mb{y}^\star$. Since $\widetilde{f}$ is permutation invariant, we need to focus only on the quadratic term. By computing the difference in the objective value before and after swapping $y_m^\star$ and $y_n^\star$, we have
\begin{align}
&  (y_m^\star - x_m)^2 +  (y_n^\star - x_n)^2 \nonumber\\
= \ &  (y_m^\star)^2 + (y_n^\star)^2 + x_m^2 + x_n^2- 2(y_m^\star x_m + y_n^\star x_n) \nonumber \\
{\geq} \ &  (y_m^\star)^2 + (y_n^\star)^2  + x_m^2 + x_n^2 - 2(y_m^\star x_n + y_n^\star x_m) \label{eqn:lem:order-preserve}\\
= \ & (y_n^\star - x_m)^2 + (y_m^\star - x_n)^2, \nonumber
    \end{align}
    where the inequality holds since
    \begin{align*}
        \begin{split}
             & (y_m^\star x_m + y_n^
             \star x_n ) - (y_m^\star x_n + y_n^\star x_m)\\
   = \ & y_m^\star (x_m - x_n) + y_n^\star (x_n - x_m) \\
   = \ &(y_m^\star - y_n^\star)(x_m - x_n) \leq 0.
        \end{split}
    \end{align*}
If $x_m>x_n$, then \eqref{eqn:lem:order-preserve} holds strictly and thus swapping $y_m^\star$ and $y_n^\star$ yields a smaller objective value, contradicting the optimality of $\mb{y}^\star$. If $x_m=x_n$, then \eqref{eqn:lem:order-preserve} becomes equality and swapping $y_m^\star$ and $y_n^\star$ does not change the objective value but yields another optimal solution, contradicting $\mb{y}^\star$'s uniqueness.

\section{Proof of lemma \ref{lemma:subgrad}}
\label{Appendix:subdifferential}
     Define $\mc{I}_{\max} \triangleq \arg\max_{n \in [N]} \{x(n) + d(n)\}$ and $\mc{I}_{\min} \triangleq \arg\min_{n \in [N]} \{x(n) + d(n)\}$. Then the subdifferential of the non-smooth part is 
    \begin{align}
           &\partial \left(\max_n(x(n)+d(n)) - \min_n(x(n)+d(n)) \right) \nonumber \\ 
           &= \operatorname{conv}\{ e_i : i \in \mc{I}_{\max} \} - \operatorname{conv}\{ e_j : j \in \mc{I}_{\min} \} , \nonumber
    \end{align}
where  $e_i$ is the standard basis vector with the $i^{th}$ component being $1$ and $0$ for all other components, and \text{conv} represents convex hull (see \cite{beck2017first}, Ch. 3). Any subgradient $\zeta \in \partial \left(\max_{n }(x(n)+d(n)) - \min_{n}(x(n)+d(n)) \right) $ satisfies

\begin{equation*}
    \zeta = \sum_{ i \in \mc{I}_{\max}} \alpha_ie_i - \sum_{ j \in \mc{I}_{\min}} \beta_je_j 
\end{equation*}
with $\sum_{ i \in \mc{I}_{\max}}\alpha_i = \sum_{ j \in \mc{I}_{\min}}\beta_j = 1$ and $\alpha_i,\beta_j \geq0 ;\forall  i,j$. Using the orthogonality of the standard basis, we have

\begin{align}
    \|\zeta\|^2 &= \sum_{i \in \mc{I}_{\max}} \alpha_i^2 + \sum_{j \in \mc{I}_{\min}} \beta_j^2 \leq 2
    \label{lem:subg:nonsmooth}
\end{align}

In addition, since \( c_n(\cdot) \) is differentiable and continuous on $ \mathbb{R}_+ $, it follows that over any closed and bounded interval $ [0, \tau_{\max}] $, the derivative $ c_n'(x) $ is bounded. Hence, for any finite $ x(n) \geq 0 $ and $ x(n) + d(n) \in [0, \tau_{\max}] $, there exists a constant $ L_n \geq 0 $ such that
$$
c_n'(x(n) + d(n)) \leq L_n, \quad\forall n\in[N].
$$
This follows from the Extreme Value Theorem, since $ c_n'(\cdot) $, being continuous on a closed and bounded domain, must attain a finite maximum.

Define $L \triangleq \max_n L_{n}$, we have the gradient from the smooth component satisfying
\begin{align}
    \label{lem:subg:smooth}
   \Bigg \|w\sum_{n\in [N]} c'_n(x(n) + d(n)) \cdot e_n\Bigg \|^2 &\leq w ^2\sum_{n\in [N]} L_n^2 \nonumber \\
   &\leq w^2N L^2   
\end{align}

Combining ~\eqref{lem:subg:nonsmooth} and ~\eqref{lem:subg:smooth}, using the inequality $\|a + b\|^2 \leq 2\|a\|^2 + 2\|b\|^2$, taking expectation give the desired results:
\begin{align}
    \mathbb{E}\left[ \left\| f'(\mathbf{x}; \mathbf{d}) \right\|^2 \right]
&\leq 2 \cdot \mathbb{E}[\|\zeta\|^2] + 2N w^2 L^2  \nonumber \\
& \leq 4 +  2N w^2 L^2   
\end{align}

\bibliographystyle{IEEEtran}
\bibliography{refs}

\end{document}